\newtheorem{lemma}{Lemma}[section]
\newtheorem{theorem}{Theorem}[section]
\newtheorem{proposition}{Proposition}[section]
\newtheorem{corollary}[lemma]{Corollary}
\newcommand{\Z}{\mathbb{Z}}
\newcommand{\R}{\mathbb{R}}
\newcommand{\C}{\mathbb{C}}
\newcommand{\leqslant}{\le}
\newcommand{\real}{\operatorname{Re}}
\newtheorem{definition}{Definition}[section]
\numberwithin{equation}{section}		 			
\numberwithin{figure}{section}			 			
\begin{document}
\title{Zeta functions on tori using contour integration}
\author{Emilio Elizalde, Klaus Kirsten, Nicolas Robles and Floyd Williams}
\date{}
\maketitle
\begin{abstract}
\noindent A new, seemingly useful presentation of zeta functions on complex tori is derived by using contour integration. It is shown to agree with the one obtained by using the Chowla-Selberg series formula, for which an alternative proof is thereby given. In addition, a new proof of the functional determinant on the torus results, which does not use the Kronecker first limit formula nor the functional equation of the non-holomorphic Eisenstein series. As a bonus, several identities involving the Dedekind eta function are obtained as well.
\end{abstract}
\tableofcontents
\section{Introduction}
Zeta regularization and the theory of spectral zeta functions are powerful and elegant techniques that allow one to assign finite values to otherwise manifestly infinite quantities in a unique and well-defined way \cite{Elizalde,EORBZ,kirs02b}.\\
Suppose we have a compact smooth manifold $M$ with a Riemannian metric $g$ and a corresponding Laplace-Beltrami operator $\Delta=\Delta(g)$, where $\Delta$ has a discrete spectrum
\begin{align}
0 = \lambda _0  < \lambda _1  < \lambda _2  <  \cdots ,\quad \mathop {\lim }\limits_{j \to \infty } \lambda _j  = \infty .
\end{align}
If we denote by $n_j$ the finite multiplicity of the $j$-th eigenvalue $\lambda_j$ of $\Delta$ then, by a result of H. Weyl \cite{Weyl}, which says that the asymptotic behavior of the eigenvalues as $j \to \infty$ is $\lambda _j  \sim j^{2/\dim M}$, we can construct the corresponding spectral zeta function as
\begin{align}
\zeta _M (s) = \sum\limits_{j = 1}^\infty  {\frac{{n_j }}{{\lambda _j^s }}},
\end{align}
which is well-defined for $\operatorname{Re} (s) > \tfrac{1}{2}\dim M$. Minakshisundaram and Pleijel \cite{MP} showed that $\zeta _M (s)$ admits a meromorphic continuation to the whole complex plane and that, in particular, $\zeta _M (s)$ is holomorphic at $s=0$. This, in turn, means that $\exp [ - \zeta _M '(0)]$ is well-defined and Ray and Singer \cite{RaySinger} set the definition
\begin{align} \label{detdef}
\det (\Delta)  = \prod\limits_{k = 1}^\infty  {\lambda _k^{n_k } } := e^{ - \zeta _M '(0)} ,
\end{align}
where it is understood that the zero eigenvalue of $\Delta$ is not taken into the product. For the reader's understanding, the motivation for this definition comes from the formal computation
\begin{align}
\exp \left[ { - \left. {\frac{d}{{ds}}} \right|_{s = 0} \sum\limits_{k = 1}^\infty  {\frac{{n_k }}{{\lambda _k^s }}} } \right] = \exp \left[ {\sum\limits_{k = 1}^\infty  {n_k \log \lambda _k } } \right] = \prod\limits_{k = 1}^\infty  {e^{n_k \log \lambda _k } }  = \prod\limits_{k = 1}^\infty  {\lambda _k^{n_k } }.
\end{align}
As long as the spectrum is discrete, definition \eqref{detdef} is suitable for more general operators on other infinite dimensional spaces. In particular, it is useful for Laplace-type operators on smooth manifolds of a vector bundle over $M$, see e.g.
\cite{Nakahara,Williams1}.
\section{Argument principle technique}
Let us introduce the basic ideas used in this article by considering a
generic one dimensional second order differential operator $\mathcal{O}:=  - d^2 /dx^2 + V(x)$ on the interval $[0,1]$, where $V(x)$ is a smooth potential. Let its eigenvalue problem be given by
\begin{align}
\mathcal{O}\phi _n (x) = \lambda _n \phi _n (x),
\end{align}
and choose Dirichlet boundary conditions $\phi _n (0)=\phi _n (1)=0$. This problem can be translated to a unique initial value problem \cite{Kirsten3,Kirsten4c,Levitan}
\begin{align}
(\mathcal{O} - \lambda  )u_\lambda  (x) = 0,
\end{align}
with $u_\lambda(0)=0$ and $u_\lambda'(0)=1$. The eigenvalues $\lambda_n$ then follow as the solutions to the equation
\begin{align}
u_\lambda  (1) = 0,
\end{align}
where $u_\lambda  (1)$ is an analytic function of $\lambda$. Let us recall the argument principle from complex analysis.
It states that if $f$ is a meromorphic function inside and on some counterclockwise contour $\gamma$ with $f$ having neither zeroes nor poles on $\gamma$, then
\begin{align}
\int_\gamma  dz\,\,{\frac{{f'(z)}}{{f(z)}}}  = 2\pi i(N - P),
\end{align}
where $N$ and $P$ are, respectively, the number of zeros and poles of $f$ inside the contour $\gamma$. There is a slightly stronger version of this statement called the generalized argument principle, stating that if $f$ is meromorphic in a simply connected set $D$ which has zeroes $a_j$ and poles $b_k$, if $g$ is an analytic function in $D$, and if we let $\gamma$ be a closed curve in $D$ avoiding $a_j$ and $b_k$, then
\begin{align}
\sum\limits_j {g(a_j )n(\gamma ,a_j )}  - \sum\limits_k {g(b_k )n(\gamma ,b_k )}  = \frac{1}{{2\pi i}}\int_\gamma  dz\,\,{g(z)\frac{{f'(z)}}{{f(z)}}} ,
\end{align}
where $n(\gamma ,a)$ is the winding number of the closed curve $\gamma$ with respect to the point $a \notin \gamma$, defined as
\begin{align}
n(\gamma ,a) = \frac{1}{{2\pi i}}\int_\gamma  {\frac{{dz}}{{z - a}}} \ \  \in \mathbb{Z}.
\end{align}
If we let $f$ be a polynomial with zeros $z_1, z_2, \cdots$ and $g(z):=z^{s}$, then
\begin{align}
\frac{1}{{2\pi i}}\int_\gamma  dz\,\,{z^s \frac{{f'(z)}}{{f(z)}}}  = z_1^s  + z_2^s  +  \cdots ,
\end{align}
or equivalently
\begin{align}
\frac{1}{{2\pi i}}\int_\gamma  dz\,\,{z^s \frac{d}{{dz}}\log f(z)}  = \sum\limits_n {z_n^s }.
\end{align}
Taking into account the asymptotic properties of $u_\lambda (1)$ and making the substitutions $z \to \lambda$ and $s \to -s$, we see that \cite{kirs02b,Kirsten3,Kirsten4c}
\begin{align}
\frac{1}
{{2\pi i}}\int_\gamma  {d\lambda \lambda ^{ - s} \frac{d}
{{d\lambda }}\log u_\lambda  (1)}  = \sum\limits_n {\lambda _n^{ - s} } : = \zeta _\mathcal{O} (s),
\end{align}
since the eigenvalues $\lambda_n$ are solutions of $u_\lambda(1)=0$. As before, $\gamma$ is a counterclockwise contour that encloses all eigenvalues, which we assume to be positive; see Fig.~ 2.1.
\begin{figure}[H]
	\centering
		\includegraphics{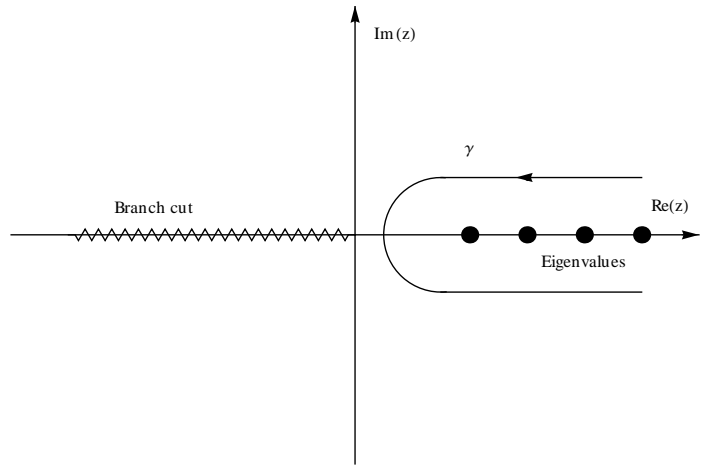}
	\caption{Integration contour $\gamma$.}
\end{figure}
The pertinent remarks for the case when finitely many eigenvalues are non-positive are given in \cite{Kirsten2}. It is important to note that the asymptotic behavior of $u_\lambda(1)$ as $\left| \lambda  \right| \to \infty$ is given by \cite{Kirsten4c,Levitan}
\begin{align}
u_\lambda  (1) \sim \frac{{\sin \sqrt \lambda  }}{{\sqrt \lambda  }}.
\end{align}
This implies that the integral representation for $\zeta _\mathcal{O}(s)$ is valid for $\operatorname{Re} (s) > \tfrac{1}{2}$ and, therefore, we must continue it analytically if we are to take its derivative at $s=0$ to compute the determinant of $\mathcal{O}$.\\\\
The next step \cite{Kirsten3,Kirsten4c} necessary to evaluate this integral is to deform the contour suitably. These deformations are allowed provided one does not cross over poles or branch cuts of the integrand. By assumption, for our integrand the poles are on the real axis and, as customary, we define the branch cut of $\lambda^{-s}$ to be on the negative real axis. This means that, as long as the behavior at infinity is appropriate, we are allowed to deform the contour to the one given in Fig.~2.2.
\begin{figure}[H]
	\centering
		\includegraphics{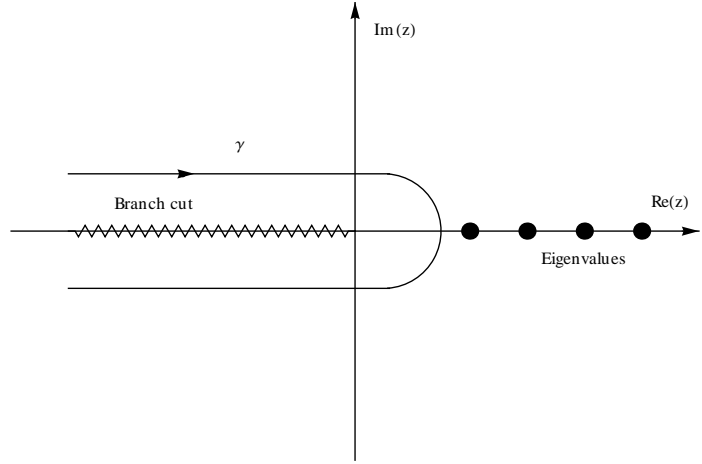}
	\caption{Deformed integration contour $\gamma$.}
\end{figure}
The result of this deformation, after shrinking the contour to the negative real axis, is
\begin{align} \label{integralopO}
\zeta _\mathcal{O} (s) = \frac{{\sin \pi s}}{\pi}\int_0^\infty  {d\lambda \lambda ^{ - s} \frac{d}{{d\lambda}}\log u_{ - \lambda } (1)}.
\end{align}
To establish the limits of the validity of this integral representation we must examine the behavior of the integrand for $\lambda \to \infty$, namely one has \cite{Kirsten2,Kirsten4c}
\begin{align}
u_{ - \lambda } (1) \sim \frac{{\sin (i\sqrt \lambda  )}}{{(i\sqrt \lambda  )}} \sim \frac{e^{\sqrt \lambda  } }{{2\sqrt \lambda  }}.
\end{align}
Thus, to leading order in $\lambda$, the integrand behaves as $\lambda ^{ - s - \tfrac{1}{2}}$, which means that convergence at infinity is established for $\operatorname{Re} (s) > \tfrac{1}{2}$, as we discussed above. On the other hand, when $\lambda \to 0$ the behavior $\lambda^{-s}$ follows. Consequently the integral representation \eqref{integralopO} is well defined for
\begin{align}
\frac{1}{2} < \operatorname{Re} (s) < 1.
\end{align}
The analytic continuation to the left is accomplished by subtracting the leading $\lambda \to \infty$ asymptotic behavior of $u_{ - \lambda } (1)$ \cite{kirs02b}.
Carrying out this procedure results in one part that is finite at $s=0$ and another part for which the analytic continuation can be constructed relatively easily.\\
The partition of $\zeta _\mathcal{O}$ that keeps the $\lambda \to 0$ term unaffected and that should improve the $\lambda \to \infty$ part is accomplished by splitting the integration range as
\begin{align}
\zeta _\mathcal{O} (s) = \zeta _{\mathcal{O},\operatorname{asy} } (s) + \zeta _{\mathcal{O},\operatorname{f} } (s),
\end{align}
where
\begin{align}
\zeta _{\mathcal{O},\operatorname{f} } (s) = \frac{{\sin \pi s}}
{\pi}\int_0^1 {d\lambda \lambda ^{ - s} \frac{d}
{{d\lambda}}\log u_{ - \lambda } (1)}  + \frac{{\sin \pi s}}
{\pi}\int_1^\infty  {d\lambda \lambda ^{ - s} \frac{d}
{{d\lambda}}\log \left[u_{-\lambda} (1) \left( \frac {2\sqrt \lambda} {e^{\sqrt \lambda}}\right)\right]} ,
\end{align}
and
\begin{align}
\zeta _{\mathcal{O},\operatorname{asy} } (s) = \frac{{\sin \pi s}}
{\pi}\int_1^\infty  {d\lambda \lambda ^{ - s} \frac{d}
{{d\lambda}}\log \frac{{e^{\sqrt \lambda  } }}
{{2\sqrt \lambda  }}} .
\end{align}
Clearly, we have constructed $\zeta _{\mathcal{O},\operatorname{f} } (s)$ in such a way that it is already analytic at $s=0$ and, thus, its derivative at $s=0$ can be computed immediately
\begin{align}
\zeta _{\mathcal{O},\operatorname{f} } '(0) =  - \log [2e^{ - 1} u_0 (1)].
\end{align}
For this case, the analytic continuation to a meromorphic function on the complex plane now follows from
\begin{align} \label{trick}
\int_1^\infty  {d\lambda \lambda ^{ - \alpha } }  = \frac{1}{{\alpha  - 1}},\quad {\text{for}}\quad \operatorname{Re} (\alpha ) > 1.
\end{align}
Applying the above to $\zeta _{\mathcal{O},\operatorname{asy} } (s)$ yields
\begin{align}
\zeta _{\mathcal{O},\operatorname{asy} } (s) = \frac{{\sin \pi s}}{{2\pi}}\left( {\frac{1}{{s - 1/2}} - \frac{1}{s}} \right),
\end{align}
and thus
\begin{align}
\zeta _{\mathcal{O},\operatorname{asy} } '(0) =  - 1.
\end{align}
The contribution from both terms then becomes
\begin{align}
\zeta _\mathcal{O} '(0) =  - \log [2u_0 (1)].
\end{align}
Note how we could numerically evaluate the determinant of $\mathcal{O}$ without using a single eigenvalue explicitly \cite{Kirsten4c}.
\section{Description of the problem}
Let us next introduce the notions needed for the investigations of the Eisenstein series.
Let $M$ be a compact smooth manifold with dimension $d$ and let $s \in \mathbb{C}$ with $\operatorname{Re} (s) > d/2$,
furthermore let $\mathbb{H}$ denote the upper half-plane $\mathbb{H} = \{\tau = \tau_1 + i \tau_2, \tau_1 \in \R, \tau_2 >0 \}$.
\begin{definition}
For $c \in \R_+$ and $\vec r \in \R_ + ^d$ the homogeneous Epstein zeta function is defined as \cite{epstein1,epstein2}
\begin{align}  \label{homogeneousEpstein}
\zeta _\mathcal{E} (s,c|\vec r): = \sum\limits_{\vec m \in \Z^d } {\frac{1}
{{(c + r_1 m_1^2  +  \cdots  + r_d m_d^2 )^s }}}.
\end{align}
If $c=0$ then it is understood that the summation ranges over $\vec m \ne \vec 0$.
\end{definition}
\begin{definition}
Let $\Z_*^2 = \Z \times \Z \setminus \{(0,0)\}$ and $\tau \in \mathbb{H}$ with $\operatorname{Re} \tau  = \tau _1$ and $\operatorname{Im} \tau  = \tau _2$ . For $\operatorname{Re} (s) > 1$ the nonholomorphic Eisenstein series is defined as \cite{Williams1}
\begin{align}
E^* (s,\tau ) := \sum\limits_{(m,n) \in \Z_*^2 } {\frac{{\tau _2^s }}{{\left| {m + n\tau } \right|^{2s} }}}.
\end{align}
\end{definition}
\noindent Note that for $\tau = i$ the nonholomorphic Eisenstein series is related to the homogeneous Epstein zeta function by
\begin{align}
E^* (s,i) = \sum\limits_{(m,n) \in \Z_*^2 } {\frac{1}{{(m^2  + n^2 )^s }}}  = \zeta _\mathcal{E} (s,0|\vec 1_2 ),
\end{align}
where $\vec 1_2  = (1,1)$. The non-holomorphic Eisenstein series is not holomorphic in $\tau$ but it can be continued analytically beyond $\operatorname{Re}(s)>1$
except at $s=1$, where there is a simple pole with residue equal to $\pi$.
\begin{definition}
\noindent For $\tau \in \mathbb{H}$, the Dedekind eta function is defined as
\[\eta (\tau ) :=  {e^{\pi i\tau /12}}\prod\limits_{n = 1}^\infty  {(1 - {e^{2\pi in\tau }})}. \]
\end{definition}
\noindent While $\eta(\tau)$ is holomorphic on the upper half-plane, it cannot be continued analytically beyond it. The fundamental properties are that it satisfies the following functional equations.
\begin{proposition}
One has
\[\begin{gathered}
  \eta (\tau  + 1) = {e^{\pi i/12}}\eta (\tau ), \hfill \\
  \eta ( - {\tau ^{ - 1}}) = \sqrt { - i\tau } \eta (\tau ), \hfill \\
\end{gathered} \]
for $\tau \in \mathbb{H}$.
\end{proposition}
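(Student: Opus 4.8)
The first identity needs no more than the periodicity of the exponential. Writing $q=e^{2\pi i\tau}$ and replacing $\tau$ by $\tau+1$, each factor $e^{2\pi in\tau}$ in the product is multiplied by $e^{2\pi in}=1$, so the product is unchanged, while the prefactor $e^{\pi i\tau/12}$ acquires a factor $e^{\pi i/12}$; hence $\eta(\tau+1)=e^{\pi i/12}\eta(\tau)$. All of the substance lies in the second, modular, identity, and for it I would pass to the logarithmic derivative.

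Taking the principal logarithm gives $\log\eta(\tau)=\frac{\pi i\tau}{12}+\sum_{n\ge1}\log(1-q^n)$, and differentiating term by term—justified because the product converges locally uniformly on $\mathbb{H}$—yields
\[ \frac{d}{d\tau}\log\eta(\tau)=\frac{\pi i}{12}\,E_2(\tau),\qquad E_2(\tau):=1-24\sum_{n\ge1}\sigma_1(n)q^n. \]
Granting the transformation law $E_2(-1/\tau)=\tau^2E_2(\tau)+\frac{6\tau}{\pi i}$, the chain rule gives $\frac{d}{d\tau}\log\eta(-1/\tau)=\frac{\pi i}{12}E_2(\tau)+\frac{1}{2\tau}$. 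Consequently the auxiliary function
\[ \psi(\tau):=\log\eta(-1/\tau)-\log\eta(\tau)-\tfrac12\log(-i\tau) \]
has $\psi'\equiv0$ and is therefore constant on the connected domain $\mathbb{H}$; the principal branches are unambiguous because $-i\tau=\tau_2-i\tau_1$ lies in the open right half-plane whenever $\tau\in\mathbb{H}$. To pin the constant I would evaluate at the fixed point $\tau=i$, where $-1/\tau=i$ and $-i\tau=1$, so that $\psi(i)=0$. Exponentiating $\psi\equiv0$ then yields $\eta(-1/\tau)=\sqrt{-i\tau}\,\eta(\tau)$, as claimed.

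The one genuinely nontrivial input, and hence the main obstacle, is the transformation law for $E_2$. Since $E_2$ comes from a conditionally convergent weight-two Eisenstein series it is only quasimodular; the standard remedy is to form the nonholomorphic completion $\widehat{E}_2(\tau)=E_2(\tau)-\frac{3}{\pi\tau_2}$, establish its weight-two modularity by a regularized (Hecke) summation or by Poisson summation, and read off the anomalous term. I note that this input is itself amenable to contour integration, in keeping with the method of the present paper: Siegel's proof evaluates $\oint \cot(\pi z)\cot(\pi z/\tau)\,\frac{dz}{z}$ over an expanding parallelogram, the residues at the two pole families $z\in\mathbb{Z}$ and $z\in\mathbb{Z}\tau$ reproducing the lattice sums defining $\log\eta(\tau)$ and $\log\eta(-1/\tau)$, while the residue at the origin together with the limiting value of the boundary integral supplies the automorphy factor $\sqrt{-i\tau}$. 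Either route closes the argument; in both the delicate points are the conditional convergence and the tracking of branch cuts, the latter already disposed of by the right-half-plane observation above.
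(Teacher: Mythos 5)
Your proof of the first identity is the same immediate computation the paper has in mind when it calls that equation ``very easy to show.'' For the second identity the paper offers no proof at all --- it simply defers to a modular-forms text (Bump) --- so your argument is genuinely additional content rather than a variant of the paper's. Your route is the standard quasimodular one and the calculus is correct: $\frac{d}{d\tau}\log\eta=\frac{\pi i}{12}E_2$, the chain rule plus $E_2(-1/\tau)=\tau^2E_2(\tau)+\frac{6\tau}{\pi i}$ gives $\psi'\equiv 0$, the branch of $\log(-i\tau)$ is well defined on the right half-plane, and $\tau=i$ pins the constant to zero. The one thing to be candid about is that the entire difficulty has been displaced into the granted transformation law for $E_2$, which is a statement of essentially the same depth as the eta identity itself; your proof is only as complete as your treatment of that lemma. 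Both remedies you name are legitimate and non-circular (the Hecke/Poisson completion $\widehat{E}_2=E_2-\frac{3}{\pi\tau_2}$, or Siegel's evaluation of $\oint\cot(\pi z)\cot(\pi z/\tau)\,\frac{dz}{z}$ over expanding parallelograms), and the Siegel route has the pleasant feature, which you rightly point out, of being a contour-integration argument in the spirit of the paper --- indeed it proves the eta identity directly, without passing through $E_2$ at all, which would shorten your write-up. So: correct in outline, different from the paper (which proves nothing here), but a full proof requires you to actually carry out one of the two deferred arguments rather than merely cite it.
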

\noindent The first equation is very easy to show and the proof of the second one can be found in a book of modular forms, see for instance \cite{Bump}. With this in mind, the constant term in the Laurent expansion of $E^*(s,\tau)$ is given in the following theorem (see, e.g., \cite{Williams1}).
\begin{theorem}[Kronecker's first limit formula]
\[\mathop {\lim }\limits_{s \to 1} \left( {{E^*}(s,\tau ) - \frac{\pi }{{s - 1}}} \right) = 2\pi (\gamma  - \log 2 - \log \tau _2^{1/2}\left| {\eta (\tau )} \right|^2).\]
\end{theorem}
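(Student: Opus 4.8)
The plan is to prove Kronecker's formula directly from the lattice-sum definition of $E^*$, isolating the source of the simple pole and reducing the finite remainder to the logarithm of the Dedekind eta product by means of Poisson summation. First I would split the sum over $(m,n)\in\Z_*^2$ according to whether $n=0$. The terms with $n=0$ contribute $\tau_2^s\sum_{m\neq0}|m|^{-2s}=2\zeta(2s)\tau_2^s$, which is holomorphic at $s=1$ and equals $\tfrac{\pi^2}{3}\tau_2$ there. For the terms with $n\neq0$ I would write $|m+n\tau|^2=(m+n\tau_1)^2+n^2\tau_2^2$ and apply Poisson summation to the inner sum over $m$, using that the Fourier transform of $x\mapsto(x^2+a^2)^{-s}$ is a constant multiple of $a^{1-2s}\Gamma(s-\tfrac12)/\Gamma(s)$ at zero frequency and of $K_{s-1/2}(2\pi a|\xi|)$ at nonzero frequency $\xi$, with $a=|n|\tau_2$.

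Summing the zero-frequency piece over $n\neq0$ produces
$$2\zeta(2s-1)\,\tau_2^{1-s}\,\frac{\sqrt{\pi}\,\Gamma(s-\tfrac12)}{\Gamma(s)},$$
and the nonzero frequencies produce $\tfrac{2\pi^s}{\Gamma(s)}\tau_2^{1/2}$ times a double sum over $n,k\neq0$ of $(|k|/|n|)^{s-1/2}e^{2\pi ikn\tau_1}K_{s-1/2}(2\pi|n||k|\tau_2)$. The entire pole now sits in $\zeta(2s-1)$, which satisfies $\zeta(1+2u)=\tfrac{1}{2u}+\gamma+O(u)$ near $u=s-1=0$. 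Expanding $\tau_2^{1-s}$ and the ratio $\sqrt{\pi}\,\Gamma(s-\tfrac12)/\Gamma(s)$ — whose value at $s=1$ is $\pi$ and whose logarithmic derivative there is $\psi(\tfrac12)-\psi(1)=-2\log2$ — to first order in $u$, this term equals $\tfrac{\pi}{s-1}+2\pi\gamma-\pi\log\tau_2-2\pi\log2+O(s-1)$. Thus subtracting $\pi/(s-1)$ removes the pole and leaves the constant $2\pi\gamma-\pi\log\tau_2-2\pi\log2$.

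It remains to evaluate the oscillatory double sum at $s=1$. Here I would use the closed form $K_{1/2}(z)=\sqrt{\pi/2z}\,e^{-z}$, which collapses the Bessel functions to exponentials; splitting according to the signs of $n$ and $k$ and recognizing geometric series, the double sum reduces to $2\pi\sum_{n\geq1}\tfrac1n\big(\tfrac{q^n}{1-q^n}+\tfrac{\bar q^n}{1-\bar q^n}\big)$ with $q=e^{2\pi i\tau}$. Since $\log(1-q^n)=-\sum_{j\geq1}q^{nj}/j$, this is exactly $-4\pi\,\operatorname{Re}\sum_{n\geq1}\log(1-q^n)$. Combining it with the $\tfrac{\pi^2}{3}\tau_2$ from the $n=0$ block — which is precisely $-4\pi\,\operatorname{Re}(\pi i\tau/12)=\tfrac{\pi^2}{3}\tau_2$, the contribution of the prefactor $e^{\pi i\tau/12}$ to $-4\pi\log|\eta(\tau)|$ — and with the constant from the previous paragraph yields $2\pi\gamma-2\pi\log2-\pi\log\tau_2-4\pi\log|\eta(\tau)|$, which is $2\pi(\gamma-\log2-\log\tau_2^{1/2}|\eta(\tau)|^2)$, as claimed.

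The main obstacle I anticipate is the Poisson-summation step: establishing the modified-Bessel Fourier transform of $(x^2+a^2)^{-s}$ and rigorously justifying the interchange of the $n$-summation with that transform, together with the absolute convergence of the resulting double Bessel sum, on the strip $\operatorname{Re}(s)>1$ before continuing to $s=1$. Once $K_{1/2}$ is put in closed form, the identification of the oscillatory sum with $\log|\eta|$ is purely mechanical, so the analytic heart of the argument lies entirely in the summation formula and in the small bookkeeping of the $\gamma$, $\log2$, and $\log\tau_2$ terms coming from the Laurent expansions of $\zeta(2s-1)$ and of the Gamma ratio.
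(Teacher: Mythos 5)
Your argument is correct --- I checked the bookkeeping: the zero-frequency block does give $\pi/(s-1)+2\pi\gamma-\pi\log\tau_2-2\pi\log 2+O(s-1)$, the $K_{1/2}$ collapse gives $-4\pi\log\left|\prod_{n\ge1}(1-q^n)\right|$, and the $\tfrac{\pi^2}{3}\tau_2$ from the $n=0$ terms is exactly absorbed by the $e^{\pi i\tau/12}$ prefactor of $\eta$ --- but it is essentially the classical Poisson-summation proof, which is precisely the route this paper is written to avoid. What you do is derive the Fourier expansion of $E^*(s,\tau)$ (equivalently, the Chowla-Selberg series) and read off the Laurent data directly at $s=1$; the analytic burden sits, as you say, in the Fourier transform of $(x^2+a^2)^{-s}$ and the justification of the interchanges. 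The paper instead never touches $s=1$ directly: it uses the contour-integral representation of $\zeta_{\Delta_\tau}(s)$ to compute $\zeta_{\Delta_\tau}'(0)=-\log(\tau_2^2|\eta(\tau)|^4)$, hence $\partial_s E^*(s,\tau)|_{s=0}=-\log(4\pi^2\tau_2|\eta(\tau)|^4)$, and then transports this value from $s=0$ to $s=1$ through the functional equation $\pi^{1-2s}\Gamma(s)E^*(s,\tau)=\Gamma(1-s)E^*(1-s,\tau)$, extracting the constant term of the Laurent expansion at $s=1$ by an elementary limit manipulation with $f(s)=\Gamma(s)\pi^{2-2s}/\Gamma(2-s)$. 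Your approach buys self-containedness and directness (no functional equation, no determinant computation needed); the paper's buys the claimed novelty --- no Fourier analysis anywhere in the chain --- and exhibits Kronecker's formula as a corollary of the functional determinant rather than the other way around. If your goal were to reproduce the paper's contribution, the Poisson step is the one thing you would have to replace.
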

\noindent A modular form of weight $k>0$ and multiplier condition $C$ for the group of substitutions generated by $\tau \to \tau+\lambda$ and $\tau \to - \tfrac{1}{\tau}$ is a holomorphic function $f(\tau)$ on $\mathbb{H}$ satisfying \cite{gelbart}
\begin{enumerate}
\item[(i)] $f(\tau + \lambda) = f(\tau)$,
\item[(ii)] $f(-\tfrac{1}{\tau}) = C (\tfrac{\tau}{i})^k f(\tau)$,
\item[(iii)] $f(\tau)$ has a Taylor expansion in $e^{(2\pi i \tau / \lambda)}$ (cf (i)): $f(\tau ) = \sum\nolimits_{n = 0}^\infty  {{a_n}{e^{2\pi in\tau /\lambda }}}$, i.e. ''$f$ is holomorphic at $\infty$''.
\end{enumerate}
The space of such $f$ is denoted by $M(\lambda,k,C)$ and furthermore if $a_0 = 0$ then $f$ is a cusp form. The group of substitutions generated by $\tau \to \tau+1$ and $\tau \to -\tfrac{1}{\tau}$ is
\[\operatorname{SL} (2,\Z) = \left\{ {\left( {\begin{array}{*{20}{c}}
  a&b \\
  c&d
\end{array}} \right){\text{ such that }}a,b,c,d \in \Z{\text{ and }}ad - bc = 1} \right\},\]
therefore modular forms of weight $k$ satisfy
\[f\left( {\frac{{a\tau  + b}}{{c\tau  + d}}} \right) = {(c\tau  + d)^k}f(\tau ).\]
The Dedekind eta function is a modular form of weight $k=\tfrac{1}{2}$ and we may assume without loss of generality that either $c>0$ or $c=0$ and $d=1$. Moreover, if $c=0$ and $d=1$, then it satifies
\[\eta \left( {\frac{{a\tau  + b}}{{c\tau  + d}}} \right) = \varepsilon (a,b,c,d){(c\tau  + d)^{1/2}}\eta (\tau ),\]
where $\varepsilon (a,b,c,d) = {e^{b\pi i/12}}$, and if $c>0$ then
\[\varepsilon (a,b,c,d) = \exp \left( {i\pi \frac{{a + d}}{{12}} - s(d,c) - \frac{1}{4}} \right),\]
where $s(h,k)$ is the Dedekind sum
\[s(h,k) = \sum\limits_{n = 1}^{k - 1} {\frac{n}{k}\left( {\frac{{hn}}{k} - \left\lfloor {\frac{{hn}}{k}} \right\rfloor  - \frac{1}{2}} \right)}. \]
Finally, the non-holomorphic Eisenstein series can alternatively be defined as
\[{E^*}(s,\tau ) = \zeta_R(2s)\sum_{\substack{
  (m,n) \\
  \gcd (m,n) = 1}}
  {\frac{{\tau _2^s}}{{{{\left| {m\tau  + n} \right|}^{2s}}}}}, \]
where $\zeta_R$(s) is the Riemann zeta function, and it is unchanged by the substitutions
\[\tau  \to \frac{{a\tau  + b}}{{c\tau  + d}}\]
coming from any matrix of $\operatorname{SL} (2,\Z)$. Selberg and Chowla were interested in the problem of the analytic continuation of ${E^*}(s,\tau)$ as a function of $s$ and its functional equation. Their idea was to consider the Fourier expansion of $E^*(s,\tau)$ given by
\[{E^*}(s,\tau ) = E(s,{\tau _1} + i{\tau _2}) = \sum\limits_{m \in \Z} {{a_m}(s,{\tau _2}){e^{2\pi im{\tau _1}}}} ,\]
where $a_m$ is the Fourier coefficient
\[{a_m}(\tau_2,s) = \int_0^1 {E(s,{\tau _1} + i{\tau _2}){e^{ - 2\pi im{\tau _1}}}d{\tau _1}}. \]
The explicit formulas for these coefficients are given by \cite{Bump, gelbart}
\[{a_0} = 2\zeta_R(2s)\tau _2^s + 2\phi (s)\zeta_R(2s-1)\tau _2^{1 - s},\]
where
\[\phi (s) = \sqrt \pi  \frac{{\Gamma (s - \tfrac{1}{2})}}{{\Gamma (s)}},\]
and
\[{a_n} = 2\frac{{\tau _2^{1/2}{K_{s - 1/2}}(2\pi \left| n \right|{\tau _2})}}{{{\pi ^{ - s}}\Gamma (s)}}{\left| n \right|^{s - 1}}{\sigma _{1 - 2s}}(n),\]
where for $n \ge 1$ and $v \in \C$, we let
\[{\sigma _v}(n) :=  \sum\limits_{0 < d,\;d|n} {{d^v}} \]
denote the divisor function. In general, $\phi(s)$ is called the constant or scattering term and $a_n$ with $n \ge 1$ is the non-trivial term.\\\\
In this paper, we propose to recover these results by contour integration on the complex plane without Fourier techniques.\\\\
Indeed, the existence of two (or more) general methods of obtaining analytic continuations and functional equations of zeta functions has been known since Riemann's 1859 paper on the distribution of prime numbers. The table below summarizes the dates of some of the most common zeta functions.
\begin{center} 
    \begin{tabular}{ | l | l | l | p{5cm} |}
    \hline
    $\zeta$ \textbf{function} & \textbf{Poisson summation} & \textbf{Contour integration} \\ \hline
    Riemann zeta function & Riemann (1859) & Riemann (1859) \\ \hline
		Lerch zeta function	 & Apostol (1951) & Lerch (1874) \\ \hline
    Hurwitz zeta function & Fine (1951) & Hurwitz (1882) \\ \hline
		Dirichlet $L$-function & de la Vall\'{e}e Poussin (1896) & Berndt (1973) \\ \hline
		Dedekind zeta function & Hecke (1917) &  \\ \hline
		Epstein zeta function & Epstein (1903), Chowla-Selberg (1947) & E. K. R. W. (2013) \\
    \hline
    \end{tabular}
\end{center}
\section{Eigenvalue problem set up}
For fixed $\tau=\tau_1 + i\tau_2 \in \mathbb{H}$ the general $\tau$-Laplacian is
\begin{align}
\Delta _\tau   =  - \frac{1}
{{\tau _2^2 }}\left[ {\left( {\frac{\partial }
{{\partial x}} + \tau _1 \frac{\partial }
{{\partial y}}} \right)^2  + \left( {\tau _2 \frac{\partial }
{{\partial y}}} \right)^2 } \right] =  - \frac{1}
{{\tau _2^2 }}\left[ {\frac{{\partial ^2 }}
{{\partial x^2 }} + (\tau _1^2  + \tau _2^2 )\frac{{\partial ^2 }}
{{\partial y^2 }} + 2\tau _1 \frac{\partial }
{{\partial x}}\frac{\partial }
{{\partial y}}} \right].
\end{align}
Now we let $M = S^1 \times S^1$ be a complex torus and the corresponding integral lattice is (see Fig.~ 3.1)
\begin{align}
\mathcal{L}_\tau  : = \{ a + b\tau \; | \;a,b \in \Z\} ,\quad M: = \mathbb{C} \backslash \mathcal{L}_\tau.
\end{align}
\noindent The relevant eigenvalue problem is
\begin{align}
\Delta _\tau  \phi _\lambda  (x,y) = \lambda ^2 \phi _\lambda  (x,y),
\end{align}
with periodic boundary conditions
\begin{align}
\phi _\lambda  (x,y) = \phi _\lambda  (x + 1,y),\quad \frac{\partial }
{{\partial x}}\phi _\lambda  (x,y) = \frac{\partial }
{{\partial x}}\phi _\lambda  (x + 1,y),
\end{align}
on $x$, as well as
\begin{align}
\phi _\lambda  (x,y) = \phi _\lambda  (x,y + 1),\quad \frac{\partial }
{{\partial y}}\phi _\lambda  (x,y) = \frac{\partial }
{{\partial y}}\phi _\lambda  (x,y + 1),
\end{align}
on $y$. 
\begin{figure}[H]
	\centering
		\includegraphics{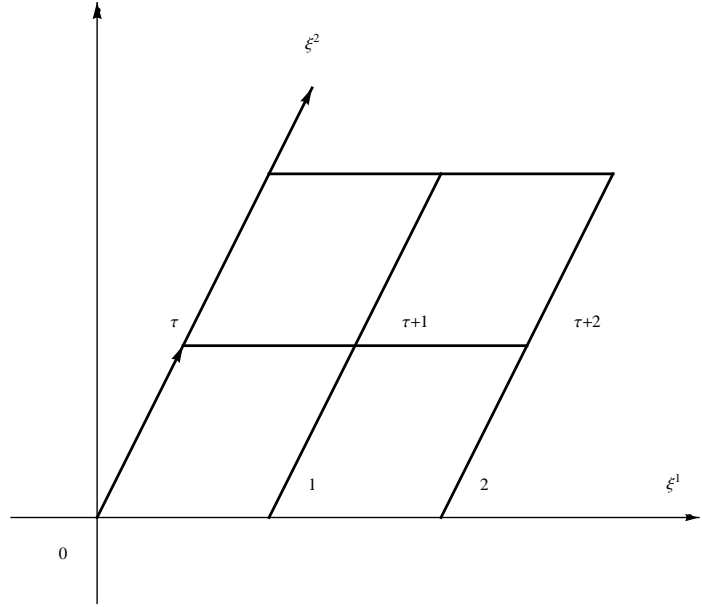}
	\caption{$\tau$ parametrizes the complex structure of this parallelogram \cite{Nakahara}.}
\end{figure}
\noindent By taking the eigenfunctions to be
\begin{align}
\phi _{m,n} (x,y) = e^{ - 2\pi imx} e^{ - 2\pi iny},
\end{align}
with $(m,n) \in \mathbb{Z}_*^2$, we see that
\begin{align}
\Delta _\tau  \phi _{m,n} (x,y) = \frac{{(2\pi )^2 }}
{{\tau _2^2 }}[m^2  + 2\tau _1 mn + (\tau _1^2  + \tau _2^2 )n^2 ]\phi _{m,n} (x,y) = \frac{{(2\pi )^2 }}
{{\tau _2^2 }}\left| {m + n\tau } \right|^2 \phi _{m,n} (x,y).
\end{align}
Therefore, the eigenvalues to consider are
\begin{align} \label{eiganvalues}
\lambda _{m,n}^2 = \frac{{{{(2\pi )}^2}}}{{\tau _2^2}}{\left| {m + n\tau } \right|^2} = \frac{{{{(2\pi )}^2}}}{{\tau _2^2}}(m + n\tau )(m + n\bar \tau ),\quad (m,n) \in \Z_*^2,
\end{align}
and we define the following spectral function.
\begin{definition}
For $\operatorname{Re} (s) > 1$ and $\tau \in \mathbb{H}$, the associated spectral zeta function of the general $\tau$-Laplacian on the complex torus is defined to be
\begin{align}
  \zeta _{\Delta _\tau  } (s) := \sum\limits_\lambda  {(\lambda ^2 )^{ - s} } &= (2\pi )^{ - 2s} \tau _2^{2s} \sum\limits_{(m,n) \in \Z_*^2 } {\left| {m + n\tau } \right|^{ - 2s} }  \nonumber \\
   &= (2\pi )^{ - 2s} \tau _2^s E^* (s,\tau ) ,
\end{align}
where $E^* (s,\tau )$ is the non-holomorphic Eisenstein series.
\end{definition}
\section{Main result}
\noindent A majority of the cases treated in the literature of spectral zeta functions \cite{Elizalde,EORBZ,Kirsten4c} have eigenvalues which give rise to homogeneous Epstein zeta functions, of the type \eqref{homogeneousEpstein},
with no mixed terms $mn$ (or equivalently, where there is no mixed partial derivative
$\partial ^2 /\partial x\partial y$ in the Laplacian). In particular, the inhomogeneous Epstein zeta function
$\zeta_\mathcal{E} (s) = \sum\nolimits_{(m,n) \in \Z_*^2 } {Q(m,n)^{ - s} }$ for a general quadratic form $Q(m,n) = am^2  + bmn + cn^2 $ with $b \ne 0$ has not been computed with the argument principle, only with Poisson summation. This therefore constitutes a new application of the contour integration method which does not exist in the literature and where different
insights are gained (in \cite{Siegel} contour integration is used but the essential step is accomplished through Fourier methods). More general cases with $Q'(m,n) = am^2  + bmn + cn^2  + dm+en+f$, where $f$ is a real positive constant, and in higher dimensions as well,
have been treated in \cite{Elizalde} by means of Poisson summation.\\\\
We split the summation into $n=0$, $m \in \Z \backslash \{ 0 \}$ and $n \ne 0$, $m \in \Z$. Thus we write
\begin{align}
  \zeta _{\Delta _\tau }(s) &= {(2\pi )^{ - 2s}}\tau _2^{2s}\sum\limits_{m =  - \infty }^\infty  {\!\!\!\!^\prime}\,\,\,m^{ - 2s} + {(2\pi )^{ - 2s}}\tau _2^{2s}
  \sum\limits_{n =  - \infty }^\infty  {\!\!\!\!^\prime}\,\,\,\sum\limits_{m =  - \infty }^\infty  \left[ (m + n\tau )(m + n\bar \tau ) \right]^{ - s}  \nonumber \\
  &  :=  2{(2\pi )^{ - 2s}}\tau _2^{2s}{\zeta _R}(2s) + {(2\pi )^{ - 2s}}\tau _2^{2s}{\zeta _I}(s), \nonumber
\end{align}
where $\zeta_R(s)$ denotes the Riemann zeta function and
\[{\zeta _I}(s) = \sum\limits_{n =  - \infty }^\infty  {\!\!\!\!^\prime}\,\,\,\,\sum\limits_{m =  - \infty }^\infty  {{{\left[ {(m + n\tau )(m + n\bar \tau )} \right]}^{ - s}}} . \]
We represent the zeta function $\zeta_I(s)$ in terms of a contour integral; the summation over $m$ is expressed using $\sin(\pi k) = 0$. Thus, we write
\[{\zeta _I}(s) = \sum\limits_{n =  - \infty }^\infty  {\!\!\!\!^\prime}\,\,\,\int_\gamma  {\frac{{dk}}{{2\pi i}}{{[(k + n\tau )(k + n\bar \tau )]}^{ - s}}\frac{d}{{dk}}\log \sin (\pi k)} , \]
where $\gamma$ is the contour in Fig. 4.1.
\begin{figure}[H]
	\centering
		\includegraphics{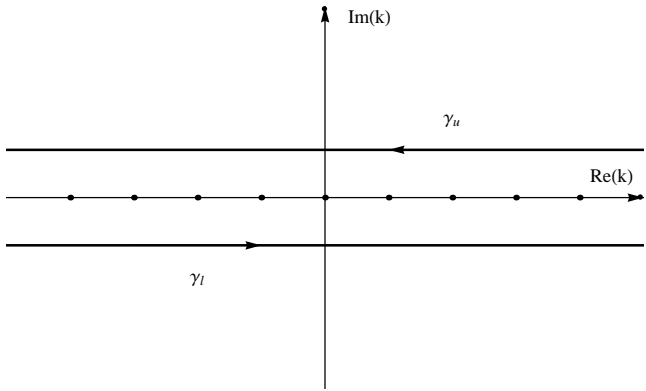}
	\caption{Contour $\gamma$ enclosing the eigenvalues.}
\end{figure}
\noindent When deforming the contour, we need to know the location of the branch cuts. So we solve for
\[(k + n\tau )(k + n\bar \tau ) =  - \alpha , \]
where $\alpha \in \R$ and $\alpha \ge 0$. Writing $\tau = \tau_1 + i \tau_2$, we have
\begin{align} \label{quadratic_equation}
{k^2} + {n^2}{\left| \tau  \right|^2} + kn\bar \tau  + kn\tau  + \alpha  = 0,
\end{align}
so that
\[k =  - n{\tau _1} \pm \sqrt {{n^2}(\tau _1^2 - {{\left| \tau  \right|}^2}) - \alpha }  =  - n{\tau _1} \pm \sqrt { - {n^2}\tau _2^2 - \alpha } .\]
We notice that
\[ - {n^2}\tau _2^2 - \alpha  \leqslant 0,\]
and the branch cuts are something like the ones given in Fig. 4.2.
\begin{figure}[H]
	\centering
		\includegraphics{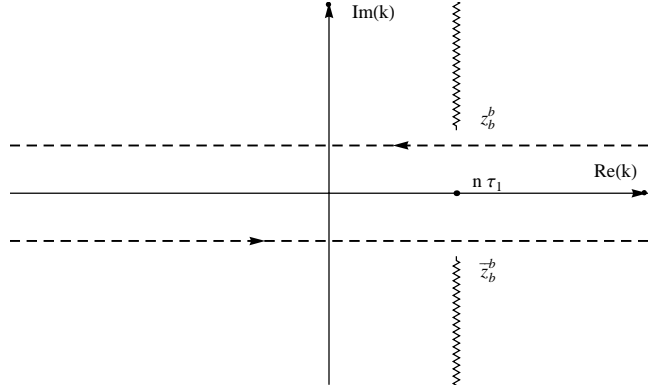}
	\caption{Location of the branch points}
\end{figure}
\noindent Let us denote the branch points by $z_b^n$ and $\bar z_b^n$, thus
\[z_b^n =  - n{\tau _1} + i\sqrt {{n^2}\tau _2^2}  =  - n{\tau _1} + i\left| n \right|{\tau _2}.\]
The natural deformation is, therefore, as indicated in Fig. 4.3.
\begin{figure}[H]
	\centering
		\includegraphics{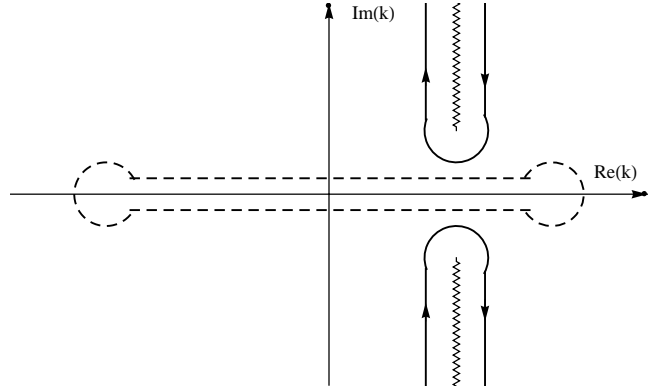}
	\caption{Newly deformed contour enclosing the branch cuts}
\end{figure}
\noindent When shrinking the contours to the branch cuts, the parametrizations will be done according to the following:
\begin{enumerate}
\item for the upper contour
\[\begin{gathered}
  k = z_b^n + {e^{i\pi /2}}u,\quad u \in (\infty ,0], \hfill \\
  k = z_b^n + {e^{ - 3i\pi /2}}u,\quad u \in [0,\infty ); \hfill \\
\end{gathered} \]
\item similarly for the lower contour
\[\begin{gathered}
  k = \bar z_b^n + {e^{ - i\pi /2}}u,\quad u \in [0,\infty ), \hfill \\
  k = \bar z_b^n + {e^{3i\pi /2}}u,\quad u \in (\infty ,0]. \hfill \\
\end{gathered} \]
\end{enumerate}
For $\zeta_I(s)$ this gives
\begin{align}
  \zeta _I(s) &= \sum\limits_{n =  - \infty }^\infty  {\!\!\!\!^\prime}
  \left\{ \int_\infty ^0 \frac{du}{2\pi i} [(z_b^n + e^{i\pi /2}u + n\tau )(z_b^n + e^{i\pi /2}u + n\bar \tau )]^{-s} \frac{d}{du}\log \sin (\pi [z_b^n + e^{i\pi /2}u]) \right.  \nonumber \\
   &+ \int_0^\infty  {\frac{{du}}{{2\pi i}}{{[(z_b^n + {e^{ - 3i\pi /2}}u + n\tau )(z_b^n + {e^{ - 3i\pi /2}}u + n\bar \tau )]}^{ - s}}\frac{d}{{du}}\log \sin (\pi [z_b^n + {e^{ - 3i\pi /2}}u])}  \nonumber \\
   &+ \int_0^\infty  {\frac{{du}}{{2\pi i}}{{[(\bar z_b^n + {e^{ - i\pi /2}}u + n\tau )(\bar z_b^n + {e^{-i\pi /2}}u + n\bar \tau )]}^{ - s}}\frac{d}{{du}}\log \sin (\pi [\bar z_b^n + {e^{ - i\pi /2}}u])}  \nonumber \\
  &\left. { + \int_\infty ^0 {\frac{{du}}{{2\pi i}}{{[(\bar z_b^n + {e^{3i\pi /2}}u + n\tau )(\bar z_b^n + {e^{3i\pi /2}}u + n\bar \tau )]}^{ - s}}\frac{d}{{du}}\log \sin (\pi [\bar z_b^n + {e^{3i\pi /2}}u])} } \right\}. \nonumber
\end{align}
We next rewrite the integrands using the fact that $z_b^n$ and $\bar z_b^n$ solve the quadratic equation \eqref{quadratic_equation} with $\alpha =0$. We  then use the notation
\[{\zeta _I}(s) = \zeta _I^{(1)}(s) + \zeta _I^{(2)}(s) + \zeta _I^{(3)}(s) + \zeta _I^{(4)}(s)\]
to denote each individual (infinite) sum above. Let us start with $\zeta _I^{(1)}(s)$: we begin by computing
\[(z_b^n + {e^{i\pi /2}}u + n\tau )(z_b^n + {e^{i\pi /2}}u + n\bar \tau ) = {e^{i\pi }}({u^2} + 2u\left| n \right|{\tau _2}),\]
so that we have
\[\zeta _I^{(1)}(s) = \sum\limits_{n =  - \infty }^\infty  {\!\!\!\!^\prime}\,\,\,
{( - {e^{ - i\pi s}})\int_0^\infty  {\frac{{du}}{{2\pi i}}{{({u^2} + 2u\left| n \right|{\tau _2})}^{ - s}}\frac{d}{{du}}\log \sin (\pi [z_b^n + iu])} } .\]
Redoing the same computation for $\zeta_I^{(2)}(s)$, but replacing $e^{i \pi /2}$ with $e^{-3 i \pi /2}$ yields
\[(z_b^n + {e^{ - 3i\pi /2}}u + n\tau )(z_b^n + {e^{ - 3i\pi /2}}u + n\bar \tau ) = {e^{ - i\pi }}({u^2} + 2u\left| n \right|{\tau _2}),\]
so that
\[\zeta _I^{(2)}(s) = \sum\limits_{n =  - \infty }^\infty  {\!\!\!\!^\prime}
\,\,\,{{e^{i\pi s}}\int_0^\infty  {\frac{{du}}{{2\pi i}}{{({u^2} + 2u\left| n \right|{\tau _2})}^{ - s}}\frac{d}{{du}}\log \sin (\pi [z_b^n + iu])} } \]
and, combining the two terms,
\[\zeta _I^{(1)}(s) + \zeta _I^{(2)}(s) = \frac{{\sin (\pi s)}}{\pi }\sum\limits_{n =  - \infty }^\infty  {\!\!\!\!^\prime} \,\,\,
\int_0^\infty  du ({u^2} + 2u\left| n \right|{\tau _2})^{ - s}\frac{d}{{du}}\log \sin (\pi [z_b^n + iu]). \]
We next consider the $\log$ terms in order to perform the analytic continuation. A suitable rewriting is
\[\sin (\pi [z_b^n + iu]) = \frac{{{e^{i\pi (iu + z_b^n)}} - {e^{ - i\pi (iu + z_b^n)}}}}{{2i}} =  - \frac{1}{{2i}}{e^{\pi u - i\pi z_b^n}}(1 - {e^{ - 2\pi u + 2i\pi z_b^n}}).\]
Note that ${e^{2i\pi z_b^n}} = {e^{2i\pi ( - n{\tau _1} + i\left| n \right|{\tau _2})}}$ is exponentially damped for large $\left| n \right|$. We therefore write
\begin{align}
  \zeta _I^{(1)}(s) + \zeta _I^{(2)}(s) &= \frac{{\sin (\pi s)}}{\pi }\sum\limits_{n =  - \infty }^\infty  {\!\!\!\!^\prime}
  \,\,\,{\int_0^\infty  {du{{({u^2} + 2u\left| n \right|{\tau _2})}^{ - s}}\frac{d}{{du}}\log [{e^{\pi u - i\pi z_b^n}}(1 - {e^{ - 2\pi u + 2i\pi z_b^n}})]} }  \nonumber \\
   &= \frac{{\sin (\pi s)}}{\pi }\sum\limits_{n =  - \infty }^\infty  {\!\!\!\!^\prime}\,\,\,{\int_0^\infty  {du{{({u^2} + 2u\left| n \right|{\tau _2})}^{ - s}}\left\{ {\pi  + \frac{d}{{du}}\log (1 - {e^{ - 2\pi u + 2i\pi z_b^n}})} \right\}} }  \nonumber \\
   &=: \zeta _I^{(12,1)}(s) + \zeta _I^{(12,2)}(s), \nonumber
\end{align}
where we define
\[\zeta _I^{(12,1)}(s)  :=  \sin (\pi s)\sum\limits_{n =  - \infty }^\infty  {\!\!\!\!^\prime}\,\,\,{\int_0^\infty  {du{{({u^2} + 2u\left| n \right|{\tau _2})}^{ - s}}} }, \]
and
\[\zeta _I^{(12,2)}(s)  :=  \frac{{\sin (\pi s)}}{\pi }\sum\limits_{n =  - \infty }^\infty  {\!\!\!\!^\prime}
\,\,\,{\int_0^\infty  {du{{({u^2} + 2u\left| n \right|{\tau _2})}^{ - s}}\frac{d}{{du}}\log (1 - {e^{ - 2\pi u + 2i\pi z_b^n}})} }. \]
To compute the derivative at $s=0$ we only need to find the analytical continuation of $\zeta _I^{(12,1)}(s)$ since $\zeta _I^{(12,2)}(s)$
is already valid for all $s \in \C$. In order to accomplish this continuation we note that, for $\tfrac{1}{2} < \operatorname{Re}(s) < 1$,
\begin{align} \label{gammafunctionintegral}
\int_0^\infty du\,\, {{u^{ - s}}{{(u + 2x)}^{ - s}}}  = {x^{1 - 2s}}\frac{{\Gamma (1 - s)\Gamma (s - \tfrac{1}{2})}}{{2\sqrt \pi  }}.
\end{align}
With this in mind, we write
\begin{align}
  \zeta _I^{(12,1)}(s) &= \sin (\pi s)\sum\limits_{n =  - \infty }^\infty  {\!\!\!\!^\prime}\,\,\,
  {\int_0^\infty  {du{{({u^2} + 2u\left| n \right|{\tau _2})}^{ - s}}} }  = 2\sin (\pi s)\sum\limits_{n = 1}^\infty  {\int_0^\infty  {du{u^{ - s}}{{(u + 2n{\tau _2})}^{ - s}}} }  \nonumber \\
   &= 2\sin (\pi s)\frac{{\Gamma (1 - s)\Gamma (s - \tfrac{1}{2})}}{{2\sqrt \pi  }}\tau _2^{1 - 2s}\sum\limits_{n = 1}^\infty  {{n^{1 - 2s}}}  = \frac{{\sqrt \pi  \Gamma (s - \tfrac{1}{2})}}{{\Gamma (s)}}\tau _2^{1 - 2s}{\zeta _R}(2s - 1). \nonumber
\end{align}
The branch in the lower half-plane is handled accordingly. In order to simplify the integrand, we note the analogy between the first and third case: $e^{i \pi /2} \to e^{-i \pi/2}$, $z_b^n \to \bar z_b^n$. Thus
\[(\bar z_b^n + {e^{ - i\pi /2}}u + n\tau )(\bar z_b^n + {e^{ - i\pi /2}}u + n\bar \tau ) = {e^{ - i\pi }}({u^2} + 2u\left| n \right|{\tau _2}),\]
so that the third function is
\[\zeta _I^{(3)}(s) = \sum\limits_{n =  - \infty }^\infty  {\!\!\!\!^\prime}\,\,\,
{({e^{i\pi s}})\int_0^\infty  {\frac{{du}}{{2\pi i}}{{({u^2} + 2u\left| n \right|{\tau _2})}^{ - s}}\frac{d}{{du}}\log \sin (\pi [\bar z_b^n - iu])} }. \]
Similarly, for the last function it follows that
\[\zeta _I^{(4)}(s) = \sum\limits_{n =  - \infty }^\infty  {\!\!\!\!^\prime}\,\,\,
{( - {e^{ - i\pi s}})\int_0^\infty  {\frac{{du}}{{2\pi i}}{{({u^2} + 2u\left| n \right|{\tau _2})}^{ - s}}\frac{d}{{du}}\log \sin (\pi [\bar z_b^n - iu])} }. \]
Adding up these two terms yields
\[\zeta _I^{(3)}(s) + \zeta _I^{(4)}(s) = \frac{{\sin (\pi s)}}{\pi }\sum\limits_{n =  - \infty }^\infty  {\!\!\!\!^\prime}\,\,\,
{\int_0^\infty  {du{{({u^2} + 2u\left| n \right|{\tau _2})}^{ - s}}\frac{d}{{du}}\log \sin (\pi [\bar z_b^n - iu])} }. \]
Going through a similar manipulation of the $\log \sin $ term as above allows us to write
\[\sin (\pi [\bar z_b^n - iu]) = \frac{{{e^{i\pi (\bar z_b^n - iu)}} - {e^{ - i\pi (\bar z_b^n - iu)}}}}{{2i}} = \frac{1}{{2i}}{e^{\pi u + i\pi \bar z_b^n}}(1 - {e^{ - 2\pi u - 2i\pi \bar z_b^n}}).\]
We note that $\pi u + i\pi \bar z_b^n = \pi u + i\pi ( - n{\tau _1} - i\left| n \right|{\tau _2})$ and so ${e^{ - 2\pi u - 2i\pi \bar z_b^n}}$ is, like in the previous case, exponentially damped as $\left| n \right| \to \infty$. Once more, using \eqref{gammafunctionintegral},
\begin{align}
  \zeta _I^{(3)}(s) + \zeta _I^{(4)}(s) &= \frac{{\sin (\pi s)}}{\pi }\sum\limits_{n =  - \infty }^\infty  {\!\!\!\!^\prime}\,\,\,
  {\int_0^\infty  {du{{({u^2} + 2u\left| n \right|{\tau _2})}^{ - s}}\left[ {\pi  + \frac{d}{{du}}\log (1 - {e^{ - 2\pi u - 2i\pi \bar z_b^n}})} \right]} }  \nonumber \\
   &= \frac{{\sqrt \pi  \Gamma (s - \tfrac{1}{2})}}{{\Gamma (s)}}\tau _2^{1 - 2s}{\zeta _R}(2s - 1) \nonumber \\
   &+ \frac{{\sin (\pi s)}}{\pi }\sum\limits_{n =  - \infty }^\infty  {\!\!\!\!^\prime}\,\,\,
   {\int_0^\infty  {du{{({u^2} + 2u\left| n \right|{\tau _2})}^{ - s}}\frac{d}{{du}}\log (1 - {e^{ - 2\pi u - 2i\pi \bar z_b^n}})} }.  \nonumber
\end{align}
Therefore, the final result is
\begin{align}
  {\zeta _{{\Delta _\tau }}}(s) &= 2{(2\pi )^{ - 2s}}\tau _2^{2s}{\zeta _R}(2s) + {(2\pi )^{1 - 2s}}\tau _2^{2s}\frac{{\Gamma (s - \tfrac{1}{2})}}{{\sqrt \pi  \Gamma (s)}}\tau _2^{1 - 2s}{\zeta _R}(2s - 1) \nonumber \\
   &+ {(2\pi )^{ - 2s}}\tau _2^{2s}\frac{{\sin (\pi s)}}{\pi }\sum\limits_{n =  - \infty }^\infty  {\!\!\!\!^\prime}\,\,\,
   {\int_0^\infty  {du{u^{ - s}}{{(u + 2\left| n \right|{\tau _2})}^{ - s}}}} \nonumber\\
   &\quad \quad\quad \quad {\times\frac{d}{{du}}\log [(1 - {e^{ - 2\pi u + 2i\pi z_b^n}})(1 - {e^{ - 2\pi u + 2i\pi \bar z_b^n}})]} , \nonumber
\end{align}
which is now valid for all $s \in \C \backslash\{1\}$. Re-writing the sum so that it goes from $n=1$ to $n = \infty$ we have thus proved the following result.
\begin{proposition}
The spectral zeta function of $\Delta_{\tau}$ on $S^1 \times S^1$ can be written as
\begin{align} \label{main_theorem}
  {\zeta _{{\Delta _\tau }}}(s) &= 2{(2\pi )^{ - 2s}}\tau _2^{2s}{\zeta _R}(2s) + {(2\pi )^{1 - 2s}}{\tau _2}\frac{{\Gamma (s - \tfrac{1}{2})}}{{\sqrt \pi  \Gamma (s)}}{\zeta _R}(2s - 1) \nonumber \\
   &+ \frac{{2\sin (\pi s)}}{\pi }{\left( {\frac{{2\pi }}{{{\tau _2}}}} \right)^{ - 2s}}\sum\limits_{n = 1}^\infty  {\int_0^\infty  {du{{({u^2} + 2un{\tau _2})}^{ - s}}} \frac{d}{{du}}\log [(1 - {e^{ - 2\pi u - 2i\pi n\bar \tau }})(1 - {e^{ - 2\pi u + 2i\pi n\tau }})]},
\end{align}
for $\tau \in \mathbb{H}$ and $s \in \C \backslash\{1\}$.
\end{proposition}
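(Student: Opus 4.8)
The plan is to reduce the double lattice sum defining $\zeta_{\Delta_\tau}(s)$ to a one–dimensional contour integral by the argument principle of Section 2, handling the index $m$ through contour integration while leaving $n$ as an explicit sum. First I would isolate the $n=0$ slice, $\sum_{m\ne 0} m^{-2s} = 2\zeta_R(2s)$, which produces the leading term $2(2\pi)^{-2s}\tau_2^{2s}\zeta_R(2s)$, and collect all $n\ne 0$ into $\zeta_I(s)$. The key observation is that the integers $m$ are precisely the zeros of $\sin(\pi k)$, so the argument-principle identity $\frac{1}{2\pi i}\int_\gamma g(k)\frac{d}{dk}\log\sin(\pi k)\,dk = \sum_m g(m)$ lets me write $\zeta_I(s) = \sum_{n\ne 0}\int_\gamma \frac{dk}{2\pi i}\,[(k+n\tau)(k+n\bar\tau)]^{-s}\frac{d}{dk}\log\sin(\pi k)$, with $\gamma$ the contour of Fig. 4.1 encircling the real axis, valid initially for $\real(s)>1$.

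Next I would deform $\gamma$ onto the branch cuts of the factor $[(k+n\tau)(k+n\bar\tau)]^{-s}$. Solving $(k+n\tau)(k+n\bar\tau)=-\alpha$ for $\alpha\ge 0$ locates the branch points at $z_b^n=-n\tau_1+i|n|\tau_2$ and $\bar z_b^n$, one in each half-plane, with vertical cuts. Shrinking $\gamma$ onto these two cuts yields four half-line integrals, parametrized as $k=z_b^n+e^{\pm i\pi/2}u$ on the upper cut and $k=\bar z_b^n+e^{\mp i\pi/2}u$ on the lower. On each cut the quadratic factors as $e^{\pm i\pi}(u^2+2u|n|\tau_2)$, so the integrand acquires a phase $e^{\mp i\pi s}$; adding the two pieces of a given cut turns these into $-e^{-i\pi s}+e^{i\pi s}=2i\sin(\pi s)$, which against the $1/(2\pi i)$ collapses to the real prefactor $\sin(\pi s)/\pi$. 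This gives $\zeta_I^{(1)}+\zeta_I^{(2)}$ and $\zeta_I^{(3)}+\zeta_I^{(4)}$ as real integrals of $(u^2+2u|n|\tau_2)^{-s}\frac{d}{du}\log\sin(\pi[z_b^n\pm iu])$.

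The analytic continuation then comes from splitting the logarithm. Writing $\sin(\pi[z_b^n+iu])=-\frac{1}{2i}e^{\pi u-i\pi z_b^n}(1-e^{-2\pi u+2i\pi z_b^n})$ separates the linear exponent, whose $u$-derivative is the constant $\pi$, from a remainder $\log(1-e^{-2\pi u+2i\pi z_b^n})$ that is exponentially damped in $|n|$ (since $\real(2i\pi z_b^n)=-2\pi|n|\tau_2$) and converges for all $s$. The constant-$\pi$ piece $\zeta_I^{(12,1)}(s)=\sin(\pi s)\sum_{n\ne 0}\int_0^\infty (u^2+2u|n|\tau_2)^{-s}du$ I would evaluate by the Beta-type integral \eqref{gammafunctionintegral}, together with $\sum_{n\ge1}n^{1-2s}=\zeta_R(2s-1)$ and the reflection formula $\sin(\pi s)\Gamma(1-s)=\pi/\Gamma(s)$; this produces exactly the scattering term $\frac{\sqrt\pi\,\Gamma(s-\tfrac12)}{\Gamma(s)}\tau_2^{1-2s}\zeta_R(2s-1)$. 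Treating the lower cut identically, and then folding the $n<0$ terms onto $n\ge1$ — where, using $z_b^{-n}$ and $\bar z_b^{-n}$, the $n$ and $-n$ integrands coincide and merge the upper and lower damped logarithms into the single product $\log[(1-e^{-2\pi u-2i\pi n\bar\tau})(1-e^{-2\pi u+2i\pi n\tau})]$ with an overall factor $2$ — assembles the stated formula after restoring the prefactor $(2\pi)^{-2s}\tau_2^{2s}=(2\pi/\tau_2)^{-2s}$.

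The main obstacle will be the careful bookkeeping of the branch-cut phases: one must check that the quadratic really factors with $e^{+i\pi}$ on one side of each cut and $e^{-i\pi}$ on the other, so that the phase factors $\pm e^{\pm i\pi s}$ combine into $2i\sin(\pi s)$ rather than cancelling, and that the orientations of the four half-lines are tracked consistently. Secondary technical points are justifying the deformation itself — that the arcs at infinity contribute nothing, using the growth of $\log\sin(\pi k)$ and of the quadratic, and that the integrand is regular enough near the branch points to shrink the contour onto the cuts — and justifying the interchange of the $n$-sum with the integral in the strip $\tfrac12<\real(s)<1$ where every ingredient converges, before invoking analytic continuation to extend the identity to all $s\in\C\setminus\{1\}$.
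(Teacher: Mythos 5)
Your proposal follows essentially the same route as the paper: the same split into the $n=0$ slice plus $\zeta_I(s)$, the same $\log\sin(\pi k)$ contour representation, the same branch points $z_b^n=-n\tau_1+i|n|\tau_2$ with the four-piece parametrization whose phases combine into $\sin(\pi s)/\pi$, the same extraction of the constant-$\pi$ term evaluated via the integral \eqref{gammafunctionintegral} and the reflection formula to produce the $\zeta_R(2s-1)$ scattering term, and the same final folding of $n<0$ onto $n\ge 1$. The technical caveats you flag (phase bookkeeping, vanishing of arcs at infinity, interchange of sum and integral in $\tfrac12<\real(s)<1$ before continuation) are exactly the points the paper treats implicitly, so the proposal is correct and matches the paper's argument.
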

\noindent Before proceeding to explain the term on the second line of \eqref{main_theorem}, we first compute the functional determinant we were interested in. The derivative of the last expression at $s=0$ is obtained in terms of the Dedekind eta function:
 \begin{align}
  {\zeta' _{{\Delta _\tau }}}(0) &=  - \log \tau _2^2 + \frac{{\pi {\tau _2}}}{3} + 2\sum\limits_{n = 1}^\infty  {\int_0^\infty  {du} \frac{d}{{du}}\log [(1 - {e^{ - 2\pi u - 2i\pi n\bar \tau }})(1 - {e^{ - 2\pi u + 2i\pi n\tau }})]}  \nonumber \\
   &=  - \log \tau _2^2 + \frac{{\pi {\tau _2}}}{3} - 2\sum\limits_{n = 1}^\infty \left[ {\log (1 - {e^{ - 2i\pi n\bar \tau }}) + \log (1 - {e^{ - 2\pi u + 2i\pi n\tau }})} \right] \nonumber \\
   &=  - \log \tau _2^2 + \frac{{\pi {\tau _2}}}{3} - 2\left( {\log \eta (\tau ) - \frac{{\pi i\tau }}{{12}} + \log \eta ( - \bar \tau ) + \frac{{\pi i\bar \tau }}{{12}}} \right) \nonumber \\
   &=  - \log \tau _2^2 + \frac{{\pi {\tau _2}}}{3} - 2\left( {\log [\eta (\tau )\eta ( - \bar \tau )] + \frac{{\pi i}}{{12}}(\bar \tau  - \tau )} \right) \nonumber \\
   &=  - \log (\tau _2^2{\left| {\eta (\tau )} \right|^4}), \nonumber
\end{align}
since $\eta ( - \bar \tau ) = \overline {\eta (\tau )}$. Therefore, we have the following result \cite{Deligne, DMS, Polchinski, Williams1}.
\begin{theorem}
The functional determinant of the $\tau$-Laplacian on the complex torus is
\begin{align} \label{functional_det}
\det ({\Delta _\tau }) = \tau _2^2{\left| {\eta (\tau )} \right|^4}.
\end{align}
\end{theorem}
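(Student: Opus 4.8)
The plan is to compute $\zeta'_{\Delta_\tau}(0)$ directly from the closed form \eqref{main_theorem} and then invoke the Ray--Singer definition \eqref{detdef}, so that $\det(\Delta_\tau)=\exp[-\zeta'_{\Delta_\tau}(0)]$. Since \eqref{main_theorem} is already valid and regular at $s=0$, the whole computation reduces to differentiating the three summands at $s=0$ and matching the outcome against the Dedekind eta function. I would treat each term separately, because each degenerates at the origin in a different way.

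For the first term $2(2\pi)^{-2s}\tau_2^{2s}\zeta_R(2s)$, I would expand using the classical values $\zeta_R(0)=-\tfrac12$ and $\zeta_R'(0)=-\tfrac12\log(2\pi)$; the logarithmic derivatives of $(2\pi)^{-2s}$ and $\tau_2^{2s}$ combine with these to leave exactly $-\log\tau_2^2$. For the second term, $(2\pi)^{1-2s}\tau_2\,\Gamma(s-\tfrac12)\zeta_R(2s-1)/(\sqrt\pi\,\Gamma(s))$, the key observation is that $1/\Gamma(s)$ has a simple zero at $s=0$ with $\tfrac{d}{ds}[1/\Gamma(s)]|_{s=0}=1$; hence the term itself vanishes at $s=0$ and its derivative is just the remaining factor evaluated there. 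Using $\Gamma(-\tfrac12)=-2\sqrt\pi$ and $\zeta_R(-1)=-\tfrac{1}{12}$, this produces $\tfrac{\pi\tau_2}{3}$.

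The third term carries the prefactor $\tfrac{2\sin(\pi s)}{\pi}$, which vanishes like $2s$ at the origin, while the accompanying sum-integral is finite and analytic there (the exponential damping in the logarithm guarantees convergence for $\operatorname{Re}(s)<1$). Thus only the derivative of the $\sin$ prefactor survives, and the $s$-derivative at $s=0$ equals $2\sum_{n\ge1}\int_0^\infty \tfrac{d}{du}\log[(1-e^{-2\pi u-2i\pi n\bar\tau})(1-e^{-2\pi u+2i\pi n\tau})]\,du$. Because the integrand is a total derivative that decays exponentially as $u\to\infty$ and is finite at $u=0$ (the inner exponentials have modulus $e^{-2\pi n\tau_2}<1$), the fundamental theorem of calculus evaluates each integral to $-\log[(1-e^{-2i\pi n\bar\tau})(1-e^{2i\pi n\tau})]$.

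To finish, I would recognize these sums through the product formula $\eta(\tau)=e^{\pi i\tau/12}\prod_{n\ge1}(1-e^{2\pi in\tau})$, giving $\sum_{n\ge1}\log(1-e^{2\pi in\tau})=\log\eta(\tau)-\tfrac{\pi i\tau}{12}$ and, applying the same identity at the point $-\bar\tau\in\mathbb{H}$, $\sum_{n\ge1}\log(1-e^{-2\pi in\bar\tau})=\log\eta(-\bar\tau)+\tfrac{\pi i\bar\tau}{12}$. Collecting the three contributions, the two $\pm\tfrac{\pi\tau_2}{3}$ pieces cancel (here $\tfrac{\pi i}{12}(\bar\tau-\tau)=\tfrac{\pi\tau_2}{6}$), leaving $\zeta'_{\Delta_\tau}(0)=-\log\tau_2^2-2\log[\eta(\tau)\eta(-\bar\tau)]$, and the reality relation $\eta(-\bar\tau)=\overline{\eta(\tau)}$ (immediate from the product formula) turns the eta product into $|\eta(\tau)|^2$, so that $\zeta'_{\Delta_\tau}(0)=-\log(\tau_2^2|\eta(\tau)|^4)$ and the determinant follows. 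I expect the only genuinely delicate point to be the third term: one must justify interchanging $\tfrac{d}{ds}$ with the sum and the $u$-integral at $s=0$, and verify that the boundary contribution at $u=\infty$ truly vanishes while the one at $u=0$ is finite; the redeeming feature is that at $s=0$ the factor $(u^2+2un\tau_2)^{-s}$ is identically $1$, which is exactly what makes the boundary evaluation clean, and the attractive structural point is the exact cancellation of the linear-in-$\tau_2$ terms that lets the answer assemble into $|\eta(\tau)|^4$.
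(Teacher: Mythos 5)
Your proposal is correct and follows essentially the same route as the paper: differentiate the representation \eqref{main_theorem} term by term at $s=0$ (using $\zeta_R(0)=-\tfrac12$, $\zeta_R'(0)=-\tfrac12\log 2\pi$, the simple zero of $1/\Gamma(s)$, and the vanishing of $\sin(\pi s)$), evaluate the resulting $u$-integrals by the fundamental theorem of calculus, and assemble the sums via the product formula for $\eta$ together with $\eta(-\bar\tau)=\overline{\eta(\tau)}$, after which the Ray--Singer definition gives the determinant. The cancellation of the $\tfrac{\pi\tau_2}{3}$ terms and the final value $\zeta'_{\Delta_\tau}(0)=-\log(\tau_2^2|\eta(\tau)|^4)$ match the paper exactly.
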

\begin{proof}
By definition
\[\det ({\Delta _\tau }) = \exp ( - {\zeta' _{{\Delta _\tau }}}(0)) = \tau _2^2{\left| {\eta (\tau )} \right|^4},\]
as claimed.
\end{proof}
\noindent We note that in order to obtain the value of $\zeta'_{\Delta_\tau}(0)$ we have not used Kronecker's first limit formula, which in turn, depends on the functional equation of $E^*(s,\tau)$. Thus, viewed under this optic, the method of contour integration is cheaper in the sense that it requires less resources to provide the functional determinant. For a derivation of this theorem using the Kronecker formula, see, for instance, \cite{Williams1}.
\section{Additional results and consequences}
\subsection{The Chowla-Selberg series formula}
In fact, the term on the second line of \eqref{main_theorem} can be shown to be the term from the Chowla-Selberg series formula obtained through Poisson summation methods. To see this, let us compute the $\log$-terms further. Expanding the logarithm,
\[\log (1 - {e^{ - 2\pi u - 2i\pi n\bar \tau }}) =  - \sum\limits_{k = 1}^\infty  {\frac{1}{k}{e^{ - 2\pi ku - 2i\pi nk\bar \tau }}}, \]
and for $\operatorname{Re}(s)<1$ the integral becomes
\begin{align}
  {\Upsilon _{\bar \tau }}(s,n) & : =  \int_0^\infty  {du{u^{ - s}}{{(u + 2n{\tau _2})}^{ - s}}\frac{d}{{du}}\log [(1 - {e^{ - 2\pi u - 2i\pi n\bar \tau }})]}  \nonumber \\
   &=  - \int_0^\infty  {du{u^{ - s}}{{(u + 2n{\tau _2})}^{ - s}}\frac{d}{{du}}\sum\limits_{k = 1}^\infty  {\frac{1}{k}{e^{ - 2\pi ku - 2i\pi nk\bar \tau }}} }  \nonumber \\
   &= 2\pi \sum\limits_{k = 1}^\infty  {\int_0^\infty  {du{u^{ - s}}{{(u + 2n{\tau _2})}^{ - s}}{e^{ - 2\pi ku - 2i\pi nk\bar \tau }}} }  \nonumber \\
   &= 2\pi \sum\limits_{k = 1}^\infty  {{e^{ - 2i\pi nk\bar \tau }}\int_0^\infty  {du{u^{ - s}}{{(u + 2n{\tau _2})}^{ - s}}{e^{ - 2\pi ku}}} }  \nonumber \\
   &= 2\pi \sum\limits_{k = 1}^\infty  {{e^{ - 2i\pi nk\bar \tau }}\frac{{{K_{1/2 - s}}(2\pi kn{\tau _2})}}{{\sin (\pi s)\Gamma (s)}}{e^{2\pi nk{\tau _2}}}{k^{s - 1/2}}{\pi ^s}\frac{1}{{\sqrt {n{\tau _2}} }}{{(n{\tau _2})}^{1 - s}}}  \nonumber \\
   &= \frac{{2{\pi ^{1 + s}}\tau _2^{1/2 - s}}}{{\sin (\pi s)\Gamma (s)}}\sum\limits_{k = 1}^\infty  {{e^{ - 2\pi nki{\tau _1}}}{k^{s - 1/2}}{n^{1/2 - s}}{K_{1/2 - s}}(2\pi kn{\tau _2})},  \nonumber
\end{align}
where $K_n(z)$ is the Bessel function of the second kind, defined by
\[{K_\nu }(z) : = \frac{1}{2}\int_0^\infty dt\,\, {\exp \left[ { - \frac{x}{2}\left( {t + \frac{1}{t}} \right)} \right]{t^{\nu  - 1}}}. \]
Moreover, the integral of the other $\log$-term only differs in that $e^{-2 \pi i n k \bar \tau}$ is replaced by $e^{2 \pi i n k \tau}$. Thus, noting $\tau_2 + i \tau = \tau_2 + i(\tau_1 + i \tau_2) = i\tau_1$, we have
\begin{align}
  {\Upsilon _\tau }(s,n) &  := \int_0^\infty  {du{u^{ - s}}{{(u + 2n{\tau _2})}^{ - s}}\frac{d}{{du}}\log [(1 - {e^{ - 2\pi u + 2i\pi n\tau }})]}  \nonumber \\
   &= \frac{{2{\pi ^{1 + s}}\tau _2^{1/2 - s}}}{{\sin (\pi s)\Gamma (s)}}\sum\limits_{k = 1}^\infty  {{e^{2\pi nki{\tau _1}}}{k^{s - 1/2}}{n^{1/2 - s}}{K_{1/2 - s}}(2\pi kn{\tau _2})}.  \nonumber
\end{align}
If we set the third term in \eqref{main_theorem} to be $Q(s,\tau)$, then we see that
\begin{align}
  Q(s,\tau ) &= \frac{{2\sin (\pi s)}}{\pi }{\left( {\frac{{2\pi }}{{{\tau _2}}}} \right)^{ - 2s}}\sum\limits_{n = 1}^\infty  {\left[{\Upsilon _\tau }(s,n) + {\Upsilon _{\bar \tau }}(s,n)\right]}  \nonumber \\
   &= \frac{{2\sin (\pi s)}}{\pi }{\left( {\frac{{2\pi }}{{{\tau _2}}}} \right)^{ - 2s}}\frac{{2{\pi ^{1 + s} \tau_2^{1/2-s} }}}{{\sin (\pi s)\Gamma (s)}}\sum\limits_{n = 1}^\infty  {\sum\limits_{k = 1}^\infty  {{e^{2\pi nki{\tau _1}}}{k^{s - 1/2}}{n^{1/2 - s}}{K_{1/2 - s}}(2\pi kn{\tau _2})} }  \nonumber \\
   &+ \frac{{2\sin (\pi s)}}{\pi }{\left( {\frac{{2\pi }}{{{\tau _2}}}} \right)^{ - 2s}}\frac{{2{\pi ^{1 + s} \tau_2^{1/2-s} }}}{{\sin (\pi s)\Gamma (s)}}\sum\limits_{n = 1}^\infty  {\sum\limits_{k = 1}^\infty  {{e^{ - 2\pi nki{\tau _1}}}{k^{s - 1/2}}{n^{1/2 - s}}{K_{1/2 - s}}(2\pi kn{\tau _2})} }  \nonumber \\
   &= \frac{{{2^{3 - 2s}}{\pi ^{ - s}}\tau _2^{1/2 + s}}}{{\Gamma (s)}}\sum\limits_{n = 1}^\infty  {\sum\limits_{k = 1}^\infty
    {\cos (2\pi nk{\tau _1}){{\left( {\frac{k}{n}} \right)}^{s - 1/2}}{K_{1/2 - s}}(2\pi kn{\tau _2})} }.  \nonumber
\end{align}
The key is now to relate the expression inside the double sum to the divisor function. The main property we are interested in is the following way of changing the double sum for a single sum while bringing in the divisor function
\[\sum\limits_{n = 1}^\infty  {\left[ {\sum\limits_{k = 1}^\infty  {{e^{ \pm 2\pi ikn{\tau _1}}}{K_{1/2 - s}}(2\pi kn{\tau _2}){{\left( {\frac{k}{n}} \right)}^{s - 1/2}}} } \right] =
\sum\limits_{n = 1}^\infty  {{\sigma _{1 - 2s}}(n){e^{ \pm 2\pi in{\tau _1}}}{K_{1/2 - s}}(2\pi n{\tau _2}){n^{s - 1/2}}} } . \]
For a proof of a more general result of this type, see for instance \cite{Williams1}. Thus, switching back to $E^*(s,\tau)$ instead of $\zeta_{\Delta_\tau}(s)$, we have arrived at the following theorem,
which now holds for all $s \ne 1$ by analytic continuation \cite{ChowlaSelberg}.
\begin{theorem}[Chowla-Selberg series formula]
One has
\begin{align}
  {E^*}(s,\tau ) &= 2\tau _2^s{\zeta _R}(2s) + 2\pi \tau _2^{1 - s}\frac{{\Gamma (s - \tfrac{1}{2})}}{{\sqrt \pi  \Gamma (s)}}{\zeta _R}(2s - 1) \nonumber \\
   &+ \frac{{8{\pi ^s}\tau _2^{1/2}}}{{\Gamma (s)}}\sum\limits_{n = 1}^\infty  {{\sigma _{1 - 2s}}(n)\cos (2\pi n{\tau _1}){K_{1/2 - s}}(2\pi n{\tau _2}){n^{s - 1/2}}},  \nonumber
\end{align}
for $\tau \in \mathbb{H}$ and $s \in \C \backslash \{ 1 \}$.
\end{theorem}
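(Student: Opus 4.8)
The plan is to read the Chowla--Selberg formula directly off the Proposition, using only the relation between $\zeta_{\Delta_\tau}(s)$ and $E^*(s,\tau)$ recorded in the Definition together with the divisor identity and the Bessel-function evaluations already assembled in the excerpt. Concretely, I would start from $\zeta_{\Delta_\tau}(s) = (2\pi)^{-2s}\tau_2^s E^*(s,\tau)$, so that $E^*(s,\tau) = (2\pi)^{2s}\tau_2^{-s}\zeta_{\Delta_\tau}(s)$, and then multiply each of the three terms of \eqref{main_theorem} by $(2\pi)^{2s}\tau_2^{-s}$. For the first two terms this is pure prefactor bookkeeping: $2(2\pi)^{-2s}\tau_2^{2s}\zeta_R(2s)$ becomes $2\tau_2^s\zeta_R(2s)$, and $(2\pi)^{1-2s}\tau_2\frac{\Gamma(s-\tfrac12)}{\sqrt\pi\Gamma(s)}\zeta_R(2s-1)$ becomes $2\pi\tau_2^{1-s}\frac{\Gamma(s-\tfrac12)}{\sqrt\pi\Gamma(s)}\zeta_R(2s-1)$, matching the first two terms of the claim exactly.

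For the remaining piece $Q(s,\tau)$ I would insert the Bessel-function form already derived, namely $Q(s,\tau)=\frac{2^{3-2s}\pi^{-s}\tau_2^{1/2+s}}{\Gamma(s)}\sum_{n,k\ge1}\cos(2\pi nk\tau_1)(k/n)^{s-1/2}K_{1/2-s}(2\pi kn\tau_2)$; multiplying by $(2\pi)^{2s}\tau_2^{-s}=2^{2s}\pi^{2s}\tau_2^{-s}$ collapses the powers of $2$, $\pi$ and $\tau_2$ to the single prefactor $\frac{8\pi^s\tau_2^{1/2}}{\Gamma(s)}$. The one genuinely non-cosmetic step is the divisor rearrangement converting the double sum over $(n,k)$ into a single sum weighted by $\sigma_{1-2s}$. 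I would prove it by setting $m=kn$ and summing over factorizations: for fixed $m$, writing $k=d$ and $n=m/d$ gives $(k/n)^{s-1/2}=(d^2/m)^{s-1/2}=m^{1/2-s}d^{2s-1}$, so the inner sum is $m^{1/2-s}\sigma_{2s-1}(m)$; the elementary symmetry $\sigma_{2s-1}(m)=m^{2s-1}\sigma_{1-2s}(m)$, obtained by replacing $d$ with $m/d$, then turns this into $m^{s-1/2}\sigma_{1-2s}(m)$, which is precisely the displayed divisor identity. Applying it to the two exponential sums $e^{\pm2\pi ikn\tau_1}$ and averaging reconstitutes the cosine, giving the third term of the theorem.

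The main obstacle lies not in these manipulations but in the range of validity. Every computation leading to $Q(s,\tau)$ was carried out in the strip $\tfrac12<\operatorname{Re}(s)<1$, where the integral \eqref{gammafunctionintegral} and the term-by-term expansion of the logarithm converge, and where the interchange of the $n$- and $k$-summations is legitimate. To extend the equality to all $s\in\C\setminus\{1\}$ I would argue by analytic continuation: the Bessel series converges uniformly on compact sets because $K_{1/2-s}(2\pi n\tau_2)$ decays exponentially in $n$, so the third term is holomorphic wherever $1/\Gamma(s)$ is and hence entire, while the first two terms are meromorphic with a single pole at $s=1$ whose residue is $\pi$, matching the known simple pole of $E^*(s,\tau)$. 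Since both sides are meromorphic on $\C$ and agree on the open strip, they agree on $\C\setminus\{1\}$; the exponential decay also supplies the absolute convergence needed to justify the summation interchange, and the more general convergence statement is available in \cite{Williams1}.
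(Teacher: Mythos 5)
Your proposal follows essentially the same route as the paper: multiply the expression in \eqref{main_theorem} by $(2\pi)^{2s}\tau_2^{-s}$, insert the Bessel-function form of the remainder $Q(s,\tau)$ coming from the $\Upsilon_\tau,\Upsilon_{\bar\tau}$ integrals, and collapse the double sum over $(n,k)$ via the divisor identity. The only differences are that you prove the divisor rearrangement directly (the paper cites \cite{Williams1} for it) and you spell out the analytic-continuation step to $s\in\C\setminus\{1\}$ that the paper merely asserts --- both correct and consistent with the paper's argument.
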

\subsection{The Nan-Yue Williams formula}
As a byproduct, we may now equate the two `remainders' in the above expressions of $E^*(s,\tau)$, recalling that the one in \eqref{main_theorem} has to be multiplied by $(2\pi)^{2s}\tau_2^{-s}$, to get
\begin{align}
  Q(s,\tau ) &= \frac{{8{\pi ^s}\tau _2^{1/2}}}{{\Gamma (s)}}\sum\limits_{n = 1}^\infty  {{\sigma _{1 - 2s}}(n)\cos (2\pi n{\tau _1}){K_{1/2 - s}}(2\pi n{\tau _2}){n^{s - 1/2}}}  \nonumber \\
   &= 2\tau _2^s\frac{{\sin (\pi s)}}{\pi }\sum\limits_{n = 1}^\infty  {\int_0^\infty  {du{{({u^2} + 2un{\tau _2})}^{ - s}}} \frac{d}{{du}}\log [(1 - {e^{ - 2\pi u - 2i\pi n\bar \tau }})(1 - {e^{ - 2\pi u + 2i\pi n\tau }})]}.  \nonumber
\end{align}
Using the Euler reflection formula for the $\Gamma$-function we obtain
\begin{align} \label{equation_corollaries}
  &\sum\limits_{n = 1}^\infty  {{\sigma _{1 - 2s}}(n)\cos (2\pi n{\tau _1}){K_{1/2 - s}}(2\pi n{\tau _2}){n^{s - 1/2}}}  \nonumber \\
  & = \frac{{\tau _2^{ - 1/2}}}{4}{\left( {\frac{{{\tau _2}}}{\pi }} \right)^s}\frac{1}{{\Gamma (1 - s)}}\sum\limits_{n = 1}^\infty  {\int_0^\infty  {du{{({u^2} + 2un{\tau _2})}^{ - s}}} \frac{d}{{du}}\log [(1 - {e^{ - 2\pi u - 2i\pi n\bar \tau }})(1 - {e^{ - 2\pi u + 2i\pi n\tau }})]}.
\end{align}
At $s=0$ the integral is easily evaluated by a similar computation to the one done previously for $\zeta'_{\Delta_{\tau}}(0)$, so that we are left with the following consequence.
\begin{corollary}
For $\tau \in \mathbb{H}$, one has
\begin{align}
  \sum\limits_{n = 1}^\infty  {{\sigma _1}(n)\cos (2\pi n{\tau _1}){K_{1/2}}(2\pi n{\tau _2}){n^{ - 1/2}}}  =  - \frac{{\tau _2^{ - 1/2}}}{2}\log \left| {\eta (\tau )} \right| - \frac{{\tau _2^{1/2}\pi }}{{24}}. \nonumber
\end{align}
\end{corollary}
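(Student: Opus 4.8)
The plan is to read the corollary off the master identity \eqref{equation_corollaries} by specializing to $s=0$; the entire argument then reduces to three elementary manoeuvres: simplifying the $s$-dependent prefactor, collapsing the remaining $u$-integral, and recognizing the boundary term as a Dedekind eta value.

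First I would substitute $s=0$ on the left-hand side of \eqref{equation_corollaries}. Because $\sigma_{1-2s}(n)\to\sigma_1(n)$, $K_{1/2-s}\to K_{1/2}$ and $n^{s-1/2}\to n^{-1/2}$, the left-hand side becomes precisely the series appearing in the statement. Its convergence is not in doubt, since the closed form $K_{1/2}(z)=\sqrt{\pi/(2z)}\,e^{-z}$ supplies an exponential factor $e^{-2\pi n\tau_2}$ that overwhelms the polynomial growth of $\sigma_1(n)$. On the right-hand side the prefactor degenerates pleasantly: $(\tau_2/\pi)^s\to1$ and $1/\Gamma(1-s)\to1/\Gamma(1)=1$, leaving only $\tau_2^{-1/2}/4$.

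The next step is to collapse the integral. At $s=0$ the weight $(u^2+2un\tau_2)^{-s}$ is identically $1$, so each $n$-summand is a pure total derivative
\[
\int_0^\infty du\,\frac{d}{du}\log\bigl[(1-e^{-2\pi u-2i\pi n\bar\tau})(1-e^{-2\pi u+2i\pi n\tau})\bigr],
\]
which the fundamental theorem of calculus reduces to the logarithm of the bracket evaluated between $u=0$ and $u=\infty$. As $u\to\infty$ both factors tend to $1$ and contribute nothing; at $u=0$ one is left with $-\log\bigl[(1-e^{-2i\pi n\bar\tau})(1-e^{2i\pi n\tau})\bigr]$. This is exactly the manipulation already performed in computing $\zeta'_{\Delta_\tau}(0)$.

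Finally I would match the boundary term to $\eta$. The conjugation identity $e^{-2i\pi n\bar\tau}=\overline{e^{2i\pi n\tau}}$ shows $(1-e^{-2i\pi n\bar\tau})(1-e^{2i\pi n\tau})=\lvert1-e^{2i\pi n\tau}\rvert^2$, so the summand equals $-2\log\lvert1-e^{2\pi i n\tau}\rvert$. Taking the modulus in the product defining $\eta(\tau)$ and using $\lvert e^{\pi i\tau/12}\rvert=e^{-\pi\tau_2/12}$ gives $\sum_{n\ge1}\log\lvert1-e^{2\pi i n\tau}\rvert=\log\lvert\eta(\tau)\rvert+\pi\tau_2/12$; assembling the constants then yields the asserted right-hand side $-\tfrac{1}{2}\tau_2^{-1/2}\log\lvert\eta(\tau)\rvert-\tfrac{\pi}{24}\tau_2^{1/2}$. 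The only genuine subtlety, and hence the main obstacle, is the justification for setting $s=0$ term by term: one must observe that both sides of \eqref{equation_corollaries} are analytic in $s$ near $0$ (the Bessel series on the left and the integral-series on the right each converge locally uniformly by the exponential decay just noted), so that the identity, established on its original domain, persists to $s=0$ by analytic continuation and the substitution is legitimate.
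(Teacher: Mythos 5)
Your proposal is correct and is essentially the paper's own argument: the paper likewise obtains the corollary by setting $s=0$ in \eqref{equation_corollaries}, evaluating the now-elementary total-derivative integral at its endpoints, and recognizing the resulting sum $\sum_{n\ge 1}\log\lvert 1-e^{2\pi i n\tau}\rvert$ via the product formula for $\eta(\tau)$, exactly as in the earlier computation of $\zeta'_{\Delta_\tau}(0)$. Your added remark on analyticity near $s=0$ (exponential decay of $K_{1/2-s}$ on the left, convergence of the integral-series on the right) is a justified detail the paper leaves implicit.
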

\noindent In \cite{NanWilliams} this is given in a slightly different context and
\[\sum\limits_{n = 1}^\infty  {{\sigma _1}(n){K_{1/2}}(2\pi n){n^{ - 1/2}}}  =  - \frac{1}{2}\log \eta (i) - \frac{\pi }{{24}} \approx 0.000936341.\]
for the particular case $\tau=i$.
\subsection{The Kronecker first limit formula}
Let us now go back to the Chowla-Selberg formula. Using the functional equation of the divisor function and the Bessel function, respectively,
\[{\sigma _\nu }(n) = {n^\nu }{\sigma _{ - \nu }}(n),\quad {K_\nu }(x) = {K_{ - \nu }}(x),\]
it is not difficult to show that the following functional equation for $E^*(s,\tau)$ holds
\begin{align} \label{functional_nonhom}
{\pi ^{1 - 2s}}\Gamma (s){E^*}(s,\tau ) = \Gamma (1 - s){E^*}(1 - s,\tau ).
\end{align}
Using this functional equation and the result of the determinant of the Laplacian \eqref{functional_det} we can now reverse the steps of the proof of Kronecker's first limit formula \cite{Williams1}. All known proofs (e.g. \cite{Motohashi,NanWilliams,Siegel,Shintani,Williams1}) use a variation of some kind of Poisson summation. The present one is a new proof in the sense that no techniques from Fourier analysis are ever used.\\
Interesting steps in the direction of special functions and complex integration were worked out in \cite{Shintani,Vardi} using the theory of multiple Gamma functions and Barnes zeta functions. Furthermore, using the Barnes double gamma function and the Selberg zeta function,
the determinants of the $n$-sphere and spinor fields on a Riemann surface are found in \cite{Sarnak}; see also \cite{136,138}.
\begin{proof}[Proof of Kronecker's first limit formula]
First we define
\begin{align}
f(s): = \Gamma (s)\pi ^{2 - 2s} /\Gamma (2 - s),
\end{align}
so that
\begin{align}
f(1) = 1,\quad \mathop {\lim }\limits_{s \to 1} \frac{{f(s) - f(1)}}{{s - 1}} = f'(1) =  - 2\log \pi  - 2\gamma,
\end{align}
where $\gamma$ is Euler's constant. The key idea is to use the result we already got
\begin{align}
\zeta _{\Delta _\tau  } '(0) =  - \log \tau _2^2 \left| {\eta (\tau )} \right|^4,
\end{align}
from where
\begin{align} \label{compare1}
  \left. {\frac{\partial }
{{\partial s}}} \right|_{s = 0} E^* (s,\tau ) &= \left. {\frac{{(2\pi )^{2s} }}
{{\tau _2^s }}} \right|_{s = 0} \zeta _{\Delta _\tau  } '(0) + \left. {\frac{\partial }
{{\partial s}}} \right|_{s = 0} \left( {\frac{{(2\pi )^{2s} }}
{{\tau _2^s }}} \right)\zeta _{\Delta _\tau  } (0) \nonumber \\
   &=  - \log (\tau _2^2 \left| {\eta (\tau )} \right|^4)  - (\log 4\pi ^2  - \log \tau _2 ) \nonumber \\
   &=  - \log (4\pi ^2 \tau _2 \left| {\eta (\tau )} \right|^4),
\end{align}
and confront it with the definition of the derivative of $E^* (s,\tau )$,
\begin{align} \label{compare2}
  \left. {\frac{\partial }
{{\partial s}}} \right|_{s = 0} E^* (s,\tau ) &:= \mathop {\lim }\limits_{s \to 0} \frac{{E^* (s,\tau ) - E^* (0,\tau )}}
{{s - 0}} = \mathop {\lim }\limits_{s \to 1} \frac{{E^* (1 - s,\tau ) + 1}}
{{1 - s}} \nonumber \\
   &= \mathop {\lim }\limits_{s \to 1} \left( {\frac{{E^* (1 - s,\tau )\Gamma (s)}}
{{\Gamma (s)(1 - s)}} - \frac{1}
{{s - 1}}} \right) = \mathop {\lim }\limits_{s \to 1} \left( {\frac{{\Gamma (s)}}
{{1 - s}}\pi ^{1 - 2s} \frac{{E^* (s,\tau )}}
{{\Gamma (1 - s)}} - \frac{1}
{{s - 1}}} \right) \nonumber \\
  &= \mathop {\lim }\limits_{s \to 1} \left[ {\frac{{\Gamma (s)\pi ^{1 - 2s} }}
{{\Gamma (2 - s)}}\left( {E^* (s,\tau ) - \frac{\pi }
{{s - 1}}} \right) + \frac{{\Gamma (s)\pi ^{1 - 2s} }}
{{\Gamma (2 - s)}}\frac{\pi }
{{s - 1}} - \frac{1}
{{s - 1}}} \right] \nonumber \\
  & = \mathop {\lim }\limits_{s \to 1} \frac{{\Gamma (s)\pi ^{1 - 2s} }}
{{\Gamma (2 - s)}}\left( {E^* (s,\tau ) - \frac{\pi }
{{s - 1}}} \right) + \mathop {\lim }\limits_{s \to 1} \frac{{f(s) - f(1)}}
{{s - 1}} \nonumber \\
   &= \frac{1}
{\pi }\mathop {\lim }\limits_{s \to 1} \left( {E^* (s,\tau ) - \frac{\pi }
{{s - 1}}} \right) - 2\log \pi  - 2\gamma .
\end{align}
Comparing \eqref{compare1} and \eqref{compare2} yields
\begin{align}
\frac{1}
{\pi }\mathop {\lim }\limits_{s \to 1} \left( {E^* (s,\tau ) - \frac{\pi }
{{s - 1}}} \right) - 2\log \pi  - 2\gamma  =  - \log (4\pi ^2 \tau _2 \left| {\eta (\tau )} \right|^4)
\end{align}
and re-arranging
\begin{align}
\mathop {\lim }\limits_{s \to 1} \left( {E^* (s,\tau ) - \frac{\pi }{{s - 1}}} \right) = 2\pi (\gamma  - \log 2 - \log \tau _2^{1/2} \left| {\eta (\tau )} \right|^2 ),
\end{align}
the desired result follows.
\end{proof}
\subsection{The Lambert series}
\noindent We may re-write \eqref{equation_corollaries} as
\begin{align}
  \mathcal{Q}(s,\tau ) & :=  \sum\limits_{n = 1}^\infty  {{\sigma _{1 - 2s}}(n)\cos (2\pi n{\tau _1}){K_{1/2 - s}}(2\pi n{\tau _2}){n^{s - 1/2}}}  \nonumber \\
  & = \frac{{\tau _2^{ - 1/2}}}{4}{\left( {\frac{{{\tau _2}}}{\pi }} \right)^s}\frac{1}{{\Gamma (2 - s)}} \nonumber \\
   &\times \sum\limits_{n = 1}^\infty  {\int_0^\infty  {\frac{{dx}}{{2x + 2n{\tau _2}}}\left[ {\frac{d}{{dx}}{{({x^2} + 2xn{\tau _2})}^{ - s + 1}}} \right]\frac{d}{{dx}}\log [(1 - {e^{ - 2\pi x - 2\pi in\bar \tau }})(1 - {e^{ - 2\pi x + 2\pi in\tau }})]} }  ,\nonumber
\end{align}
so that, after integrating by parts, we have
\begin{align}
  \mathcal{Q}(s,\tau ) &= \frac{{\tau _2^{ - 1/2}}}{4}{\left( {\frac{{{\tau _2}}}{\pi }} \right)^s}\frac{1}{{\Gamma (2 - s)}}\sum\limits_{n = 1}^\infty  {\left[ {\left( {\frac{{2\pi (1 - 2{e^{2\pi in\tau }} + {e^{2\pi in(\tau  + \bar \tau )}})}}{{({e^{2\pi in\tau }} - 1)({e^{2\pi in\bar \tau }} - 1)}}} \right)\frac{{\Gamma (2 - s)\Gamma (s - \tfrac{1}{2})}}{{2\sqrt \pi  }}{{(n{\tau _2})}^{1 - 2s}}} \right.}  \nonumber \\
   &+ \left. {\int_0^\infty dx\,\,  {g(s,x,n,{\tau _2})\frac{{{d^2}}}{{d{x^2}}}\log [(1 - {e^{ - 2\pi x - 2\pi in\bar \tau }})(1 - {e^{ - 2\pi x + 2\pi in\tau }})]} } \right], \nonumber
\end{align}
where
\[g(s,x,n,{\tau _2}) : = {x\,\,_2F_1}\left( {1 - s,s,2 - s, - \frac{x}{{2n{\tau _2}}}} \right){\left( {1 + \frac{x}{{2n{\tau _2}}}} \right)^s}{(x(x + 2n{\tau _2}))^{ - s}},\]
with ${_2F_1}$ the usual hypergeometric function. Setting $s=1$, we have
\[g(1,x,n,{\tau _2}) = \frac{1}{{2n{\tau _2}}}\]
and the integral can, once again, be easily evaluated
\[\mathcal{Q}(1,\tau ) =  - \frac{{\tau _2^{ - 1/2}}}{4}\sum\limits_{n = 1}^\infty  {\frac{1}{n}\frac{{(1 - 2{e^{2\pi in\tau }} + {e^{2\pi in(\tau  + \bar \tau )}})}}{{({e^{2\pi in\tau }} - 1)({e^{2\pi in\bar \tau }} - 1)}}}. \]
The case $\tau = i$ yields
\[\mathcal{Q}(1,i) = \sum\limits_{n = 1}^\infty  {{\sigma _{ - 1}}(n){K_{ - 1/2}}(2\pi n){n^{1/2}}}  = \frac{1}{2}\sum\limits_{n = 1}^\infty  {\frac{1}{n}\frac{1}{{{e^{2\pi n}} - 1}}}  \approx {\text{0}}{\text{.000936341}}.\]
Using the fact that
\[{K_{ \pm 1/2}}(2\pi n) = \frac{1}{{2\sqrt n }}{e^{ - 2\pi n}},\]
we obtain
\[\sum\limits_{n = 1}^\infty  {{\sigma _{ - 1}}(n){e^{ - 2\pi n}}}  = \sum\limits_{n = 1}^\infty  {\frac{1}{n}\frac{1}{{{e^{2\pi n}} - 1}}}, \]
which is a special case of the Lambert series \cite{Siegel}
\[\sum\limits_{n = 1}^\infty  {{\sigma _\alpha }(n){q^n}}  = \sum\limits_{n = 1}^\infty  {\frac{{{n^\alpha }{q^n}}}{{1 - {q^n}}}}, \]
with $\alpha = -1$ and $q=e^{-2\pi}$. To obtain a different $\sigma_{\alpha}(n)$ term, with $\alpha \in -\mathbb{Z}_*$ we have to perform additional processes of integration by parts.
\subsection{Functional equation for the remainder}
By using the functional equation of the Riemann zeta function as well as the functional equation \eqref{functional_nonhom} of the non-holomorphic Einsenstein series we can isolate another functional equation for the remainder, namely
\begin{align} \label{FE_of_Q}
{\pi ^{1 - 2s}}\Gamma (s)Q (s,\tau ) = \Gamma (1 - s)Q (1 - s,\tau ).
\end{align}
We now ask the question of whether a functional equation can also be derived by contour integration. Let us first look at a simpler example.
In \cite{Kirsten3}, the operator
\[P  :=   - \frac{{{d^2}}}{{d{\tau ^2}}}\]
is considered. Under Dirichlet boundary conditions, the eigenvalues are given by
\[
\lambda_n = n^2,
\]
where $n$ is a non-negative integer. The spectral zeta function associated to this operator is
\begin{align} \label{sepctraldef}
{\zeta _P}(s) :=  \sum\limits_n {{{({\lambda _n})}^{ - s}}}  = \sum\limits_{n = 1}^\infty  {{n^{ - 2s}}}.
\end{align}
Let us for a moment pretend that we are not aware that this series represents the Riemann zeta function. The above series is convergent for $\real(s)> \tfrac{1}{2}$. With
\[F(\lambda )  :=  \frac{1}{{2i\sqrt \lambda  }}({e^{i\sqrt \lambda  \pi }} - {e^{ - i\sqrt \lambda  \pi }}),\]
the contour integral method yields
\[{\zeta _P}(s) = \frac{1}{{2\pi i}}\int_\gamma  {d\lambda {\lambda ^{ - s}}\frac{d}{{d\lambda }}\log F(\lambda )}  = \frac{{\sin (\pi s)}}{\pi }\int_0^\infty  { dx {x^{ - s}}\frac{d}{{dx}}\log \frac{1}{{2\sqrt x }}({e^{\sqrt x \pi }} - {e^{ - \sqrt x \pi }})}, \]
where $\gamma$ is the Hankel contour enclosing the eigenvalues $\lambda_n$ depicted in Fig.~2.1. To obtain the second integral, the deformation to the negative real axis depicted in Fig.~2.2 is performed. From the behaviour of the integrand, this integral representation is seen to be valid for $\tfrac{1}{2} < \operatorname{Re} (s) < 1$. Furthermore, splitting the integral as $\int_0^1 {dx}  + \int_1^\infty  {dx}$ yields
 \begin{align} \label{kl_main}
  {\zeta _P}(s) &= \frac{{\sin (\pi s)}}{{2s - 1}} - \frac{{\sin (\pi s)}}{{2s\pi }} + \frac{{\sin (\pi s)}}{\pi }\int_1^\infty  {dx{x^{ - s}}\frac{d}{{dx}}\log (1 - {e^{ - 2\sqrt x \pi }})}  \nonumber \\
   &+ \frac{{\sin (\pi s)}}{\pi }\int_0^1 {dx{x^{ - s}}\frac{d}{{dx}}\log\left[\frac{{{e^{\sqrt x \pi }}}}{{2\sqrt x }} (1 - {e^{ - 2\sqrt x \pi }})\right]}.
\end{align}
This is now valid for $-\infty < \real(s) < 1$ and it allows us to find $\zeta_P(0) = -\tfrac{1}{2}$ as well as $\zeta_P'(0) = -\log(2\pi)$, which was the goal in \cite{Kirsten3} (more specifically, the functional determinant of the operator $P$). In fact, the above expression also tells us that $\zeta_P(-k) =0$ for $k=1,2,3,\cdots$. To construct the analytic continuation over $\C$ we focus on $\real(s)<0$ so that, splitting the second integral as a sum of two logarithms, we obtain
\[{\zeta _P}(s) = \sin (\pi s)\int_0^\infty  {dx\frac{{{x^{ - s - 1/2}}}}{{{e^{2\pi \sqrt x }} - 1}}}. \]
Performing the change $s \to \tfrac{s}{2}$ and $y = 2\pi \sqrt x$, we can write the above as
\[{\zeta _P}(\tfrac{s}{2}) = {(2\pi )^{  s}}\frac{{\sin (\tfrac{{\pi s}}{2})}}{\pi }\int_0^\infty  {dy\frac{{{y^{ - s}}}}{{{e^y} - 1}}}. \]
Next, we set $s=1-u$ so that $\real(u) = 1 - \real(s) >1$, thus
\begin{align} \label{integral_FE}
{\zeta _P}(\tfrac{{1 - u}}{2}) = {(2\pi )^{  1 - u}}\frac{{\cos (\tfrac{{\pi u}}{2})}}{\pi }\int_0^\infty  {du\frac{{{y^{u - 1}}}}{{{e^y} - 1}}}.
\end{align}
By real variable methods \cite{Titchmarsh} this last integral, which represents the Mellin transform of $1/(e^y-1)$, is seen to yield
\[\int_0^\infty  {dy\frac{{{y^{u - 1}}}}{{{e^y} - 1}}}  = \Gamma (u){\zeta _R}({u}), \quad \real(u)>1.\]
Finally, by the use of the Euler reflection formula we have
\[{\zeta _P}(\tfrac{u}{2}) = {(2\pi )^{u - 1}}\Gamma (1 - u)\frac{{\sin (\pi u)}}{{\cos (\tfrac{{\pi u}}{2})}}{\zeta _P}(\tfrac{{1 - u}}{2}) = {2^u}{\pi ^{u - 1}}\Gamma (1 - u)\sin (\tfrac{{\pi u}}{2}){\zeta _P}(\tfrac{{1 - u}}{2}).\]
This is the functional equation of the spectral zeta function $\zeta_P(u)$. It shows that $\zeta_P(u)$ has a meromorphic continuation to $\C$ with a simple pole at $u=\tfrac{1}{2}$ with residue equal to $\tfrac{1}{2}$.\\
This is in fact a variation of the original techniques introduced by Riemann in order to obtain the functional equation and analytic continuation of his zeta function (see chapter 2 of \cite{Titchmarsh} for a detailed discussion).\\
The question is now whether a similar argument can be reproduced for the non-holomorphic Eisenstein series for which we know there exists a functional equation.\\
To make matters easier, we take the case $\tau = i$. Equation (4-5) of \cite{Kirsten4c} shows that the remainder $Q(s,i)$ can be written as
\[Q(s,i) = 4\frac{{\sin (\pi s)}}{\pi }\tau _2^s\sum\limits_{n = 1}^\infty  {\int_{{n^2}}^\infty  {dz{{(z - {n^2})}^{ - s}}\frac{d}{{dz}}\log (1 - {e^{ - 2\pi \sqrt z }})} } .\]
We can re-write this as
\begin{align} \label{remaindertaui}
Q(1 - s,i) = 4\frac{{\sin (\pi s)}}{\pi }\sum\limits_{n = 1}^\infty {\int_0^\infty du\,\, {{u^{s - 1}}\left( {\frac{1}{{{e^{2\pi \sqrt {{n^2} + u} }} - 1}}\,\,\frac{\pi }{{\sqrt {{n^2} + u} }}} \right)} },
\end{align}
so that in fact we are interested in computing the Mellin transform of
\[h(x) := \frac{1}{{{e^{2\pi \sqrt {{n^2} + u} }} - 1}}\,\,\frac{\pi }{{\sqrt {{n^2} + u} }}.\]
This is to be compared with \eqref{integral_FE} but in this case it is more difficult to evaluate the integral, except for the case $n=0$ for which we would go back to the Riemann zeta function. Thus, instead of obtaining the functional equation \eqref{FE_of_Q} we can use \eqref{FE_of_Q} to conclude that the infinite sum is
\[\sum\limits_{n = 1}^\infty  {\int_0^\infty  du\,\,{{u^{s - 1}}\left( {\frac{1}{{{e^{2\pi \sqrt {{n^2} + u} }} - 1}}\,\,\frac{\pi }{{\sqrt {{n^2} + u} }}} \right)} }  = \frac{{{\pi ^{ - 2s}}}}{4}Q(s,i).\]
An alternative way to produce the functional equation of the non-holomorphic Eisenstein series is as follows, see \cite{zagier}.
\begin{definition}
Let $\lambda_n$ be the eigenvalues of an operator $\Delta$ on a compact smooth manifold $M$ with Riemannian metric $g$. For $t>0$ the heat kernel is defined by
\[K(t) : =  \sum\limits_n {{e^{ - \lambda {}_nt}}}. \]
\end{definition}
\noindent Taking the eigenvalues to be $\left| {m + n\tau } \right|\pi \tau _2^{ - 1}$ we get the following result.
\begin{lemma}[Jacobi-inversion]
If
\[K(x,\tau ) := \sum\limits_{(m,n) \in {\Z^2}} {\exp ( - \left| {m + n\tau } \right|\pi \tau _2^{ - 1}x)} \]
with $x>0$ then the following functional equation holds
\begin{align} \label{jacobiinversion}
K(x,\tau ) = {x^{ - 1}}K({x^{ - 1}},\tau ).
\end{align}
\end{lemma}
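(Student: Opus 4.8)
The plan is to prove \eqref{jacobiinversion} by two-dimensional Poisson summation applied to the Gaussian summand, exploiting the fact that the quadratic form $Q_\tau(m,n) := |m+n\tau|^2 = m^2 + 2\tau_1 mn + |\tau|^2 n^2$ underlying the heat kernel is essentially self-dual under $\operatorname{SL}(2,\Z)$. First I would record $Q_\tau$ as $v^{T}A_\tau v$ for $v=(m,n)^{T}$ and the symmetric matrix $A_\tau = \begin{pmatrix}1 & \tau_1\\ \tau_1 & |\tau|^2\end{pmatrix}$, noting that $\det A_\tau = |\tau|^2-\tau_1^2 = \tau_2^2$ and that $A_\tau$ is positive definite (its trace $1+|\tau|^2$ and determinant $\tau_2^2$ are both positive). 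Thus $K(x,\tau)=\sum_{v\in\Z^2} e^{-\pi(x/\tau_2)\,v^{T}A_\tau v}$ is an absolutely convergent theta series for every $x>0$, and Poisson summation is applicable since the summand is a Schwartz-class Gaussian in $v$.

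The key step is the Gaussian Fourier transform: for $f(v)=e^{-\pi t\,v^{T}A_\tau v}$ on $\R^2$ with $t=x/\tau_2$, one has $\hat f(w) = (t^2\det A_\tau)^{-1/2}\,e^{-\pi t^{-1}w^{T}A_\tau^{-1}w} = x^{-1}\,e^{-\pi(\tau_2/x)\,w^{T}A_\tau^{-1}w}$, where the prefactor collapses to exactly $x^{-1}$ precisely because the normalization by $\tau_2^{-1}$ cancels $\sqrt{\det A_\tau}=\tau_2$. Poisson summation then gives $K(x,\tau)=x^{-1}\sum_{w\in\Z^2} e^{-\pi(\tau_2/x)\,w^{T}A_\tau^{-1}w}$. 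Next I would compute $A_\tau^{-1}=\tau_2^{-2}\begin{pmatrix}|\tau|^2 & -\tau_1\\ -\tau_1 & 1\end{pmatrix}$ and recognize the identity $w^{T}A_\tau^{-1}w = \tau_2^{-2}(|\tau|^2 w_1^2 - 2\tau_1 w_1 w_2 + w_2^2) = \tau_2^{-2}|w_2 - w_1\tau|^2$, so that each dual summand is $e^{-\pi(x^{-1}/\tau_2)|w_2-w_1\tau|^2}$. Finally, the unimodular substitution $(m,n)=(w_2,-w_1)$, a bijection of $\Z^2$, sends $w_2-w_1\tau\mapsto m+n\tau$, identifying the dual sum with $K(x^{-1},\tau)$ and yielding \eqref{jacobiinversion}.

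The genuinely structural point — and the one worth isolating — is the self-duality that makes the functional equation relate $K$ to itself \emph{at the same $\tau$} rather than to a theta series for a different modulus. Concretely this is the statement that $S^{T}A_\tau S = \tau_2^2 A_\tau^{-1}$ for $S=\begin{pmatrix}0 & 1\\ -1 & 0\end{pmatrix}\in\operatorname{SL}(2,\Z)$, i.e. the involution $(w_1,w_2)\mapsto(w_2,-w_1)$ carries the dual form $\tau_2^2\,w^{T}A_\tau^{-1}w$ back to $Q_\tau$; verifying this equivariance (a one-line matrix computation using $S^2=-I$) is the crux, and everything else is routine bookkeeping of constants. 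The only analytic issues are the absolute convergence of the theta series and the legitimacy of Poisson summation, both immediate from the positive-definiteness of $A_\tau$, so I expect no serious obstacle beyond pinning down the normalizing factor $\tau_2^{-1}$ that produces the clean coefficient $x^{-1}$.
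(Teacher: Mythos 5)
Your proof is correct and follows exactly the route the paper indicates: two-dimensional Poisson summation applied to the Gaussian $\exp\left(-\pi (x/\tau_2)\,|m+n\tau|^2\right)$, with $\det A_\tau=\tau_2^2$ producing the clean prefactor $x^{-1}$ and the unimodular substitution $(m,n)=(w_2,-w_1)$ carrying the dual form back to $Q_\tau$ — the paper gives only the one-line hint ``Poisson summation in two variables with $f=\exp(-(u+v\tau)(u+v\bar\tau)\pi\tau_2^{-1}x^{-1})$,'' which you have simply written out in full. Note only that you have (rightly) read the exponent in the lemma as containing $|m+n\tau|^2$ rather than the $|m+n\tau|$ literally printed in the statement; the paper's own proof sketch and the subsequent Mellin-transform computation confirm that the square is intended.
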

\noindent This is proved by using Poisson summation in two variables with $f(x,y) = \exp ( - (u + v\tau )(u + v\bar \tau )\pi \tau _2^{ - 1}{x^{ - 1}})$. By making the change $t = {\left| {m + n\tau } \right|^2}\pi x \tau_2^{-1}$ in the integral representation of the $\Gamma$ function, summing over $(m,n) \in \Z_*^2$ and integrating term by term we see that for $\operatorname{Re}(s)>1$
\[\tilde E(s,\tau ) := \frac{1}{2}{\pi ^{ - s}}\Gamma (s){E^*}(s,\tau ) = \frac{1}{2}{\pi ^{ - s}}\Gamma (s)\sum\limits_{(m,n) \in \Z_*^2} {Q{{(m,n,\tau )}^{ - s}}}  = \frac{1}{2}\int_0^\infty  {(K (t,\tau ) - 1){t^{s - 1}}dt}, \]
where $Q(m,n,\tau ): = {\left| {m + n\tau } \right|^2}{\tau _2}$. Using \eqref{jacobiinversion} leads to
\[\tilde E(s,\tau ) = \tilde E(1 - s,\tau ).\]
\section{Conclusion and outlook}
We have shown in this paper that the method of contour integration applied to the eigenvalues of the Laplacian in the torus yields an alternative expression of the Chowla-Selberg series formula which is very useful to immediately and effortlessly obtain the functional determinant of the Laplacian operator as opposed to having to use sophisticated results such as the Kronecker first limit formula. In fact, once the equivalence with the Chowla-Selberg formula is established, the proof of the Kronecker limit formula follows. Moreover, as special cases of the comparison we obtain other interesting formulae, some of which are new and some already well-known, involving the Dedekind eta function as well as the divisor function.\\
One may, for instance, wish to apply this method to the case of algebraic fields and non-Euclidean space $\mathcal{H}^3$. The points of $\mathcal{H}^3$ are denoted by $z=(x,y)$ with $x=x_1 + x_2 i$ with $x_1,x_2 \in \R$ and $y>0$. The space $\mathcal{H}^3$ is embedded in the Hamiltonian algebra of quaternions so that $z = x + y \mathfrak{j} \in \mathcal{H}^3$ with $\mathfrak{j}^2=-1$ and $i \mathfrak{j} = - \mathfrak{j} i$. If we denote by $y(z)$ the third coordinate of $z$ and by $\mathbb{Z}[i]$ the Gaussian number field, then the Eisenstein series is defined by
\[E_{\Z[i]}^*(s,\tau )  :=  \frac{1}{4}\sum\limits_{\begin{subarray}{l}
  l,h \in \Z[i] \\
  (l,h) = 1
\end{subarray}}  {\frac{{{y^s}}}{{{{({{\left| {lx + h} \right|}^2} + {{\left| {ly} \right|}^2})}^s}}}} .\]
This series converges absolutely for $\operatorname{Re}(s)>2$ and its Fourier expansion is given by
\[E_{\Z[i]}^*(s,\tau ) = {y^s} + \frac{\pi }{{s - 1}}\frac{{{\zeta _K}(s - 1)}}{{{\zeta _K}(s)}}{y^{y - 2}} + \frac{{2{\pi ^s}y}}{{\Gamma (s){\zeta _K}(s)}}\sum\limits_{\begin{subarray}{l}
  n \in \Z[i] \\
  n \ne 0
\end{subarray}}  {{{\left| n \right|}^{s - 1}}{\sigma _{1 - s}}(n){K_{s - 1}}(2\pi \left| n \right|y){e^{2\pi i\operatorname{Re} (n\bar x)}}}, \]
where $\zeta_K(s)$ is the Dedekind zeta function of the Gaussian number field and $\sigma_\nu(n)=\tfrac{1}{4}\sum_{d|n}{\left| d \right|^{2\nu}}$ with $d \in \mathbb{Z}[i]$.
The techniques described here should allow for an alternative representation and derivation of this result.
\section*{Acknowledgments}
NR wishes to acknowledge partial support of SNF Grant No. 200020-131813/1 as well as Prof. Mike Cranston for the hospitality of the Department of Mathematics of the University of California, Irvine.
EE was supported in part by MICINN (Spain), grant PR2011-0128 and project FIS2010-15640, by the CPAN Consolider Ingenio Project, and by AGAUR (Generalitat de
Ca\-ta\-lu\-nya), contract 2009SGR-994, and his research was partly carried out while on leave at the Department of Physics and Astronomy, Dartmouth College, NH.

\textsc{Institute of Space Science, National Higher Research Council, ICE-CSIC, Facultat de Ciencies, Campus UAB, Torre C5-Par-2a,
08193 Bellaterra (Barcelona), Spain}\\
\href{mailto:elizalde@ieec.uab.es}{email: elizalde@ieec.uab.es; elizalde@math.mit.edu}\\

\textsc{GCAP-CASPER, Department of Mathematics, Baylor University, One Bear Place, Waco, TX 76798, USA}\\
\href{mailto:klaus_kirsten@baylor.edu}{email: klaus\_kirsten@baylor.edu}\\

\textsc{Institut f\"{u}r Mathematik, Universit\"{a}t Z\"{u}rich, Winterthurerstrasse 190, CH-8057 Z\"{u}rich, Switzerland}\\
\href{mailto:nicolas.robles@math.uzh.ch}{email: nicolas.robles@math.uzh.ch}\\

\textsc{Department of Mathematics and Statistics, Lederle Graduate Research Tower, 710 Noth Pleasant Street, University of Massachusetts, Amherst, MA 01003-9305, USA}\\
\href{mailto:williams@math.umass.edu}{email: williams@math.umass.edu}

\end{document}